\newtheorem{theorem}{Theorem}
\theoremstyle{remark}
\newtheorem{remark}{Remark}
\newcommand*\diff{\mathop{}\!\mathrm{d}}
\title{Stochastic Volatility Model with Sticky Drawdown and Drawup Processes: A Deep Learning Approach}
\author{Yuhao Liu \thanks{School of Science and Engineering, The Chinese University of Hong Kong, Shenzhen, China. Email: oxhowardliu@outlook.com} \and Pingping Jiang \thanks{Center for Financial Engineering, Soochow University, Suzhou, China. Email: ppjiang@suda.edu.cn} \and Gongqiu Zhang\thanks{School of Science and Engineering, The Chinese University of Hong Kong, Shenzhen, China. Email: zhanggongqiu@cuhk.edu.cn.}}
\begin{document}

\maketitle

\begin{abstract}
    We propose a new financial model, the stochastic volatility model with sticky drawdown and drawup processes (SVSDU model), which enables us to capture the features of winning and losing streaks that are common across financial markets but can not be captured simultaneously by the existing financial models. Moreover, the SVSDU model retains the advantages of the stochastic volatility models. Since there are not closed-form option pricing formulas under the SVSDU model and the existing simulation methods for the sticky diffusion processes are really time-consuming, we develop a deep neural network to solve the corresponding high-dimensional parametric partial differential equation (PDE), where the solution to the PDE is the pricing function of a European option according to the Feynman–Kac Theorem, and validate the accuracy and efficiency of our deep learning approach. We also propose a novel calibration framework for our model, and demonstrate the calibration performances of our models on both simulated data and historical data. The calibration results on SPX option data show that the SVSDU model is a good representation of the asset value dynamic, and both winning and losing streaks are accounted for in option values. Our model opens new horizons for modeling and predicting the dynamics of asset prices in financial markets.

\end{abstract}

\section{Introduction}
Persistent extremes of asset values appear frequently in financial markets. In the bull market, asset prices continue to rise, leading to more buying and thereby further driving up prices. While in the bear market, the continuous decline in asset prices leads to less demand for the assets, which further pushes down the prices. So the record highs or lows of asset prices tend to be clustered for concentrated periods of time. Although persistent extremes are pervasive in financial markets, most of the financial models such as the Black-Scholes model and the Heston model can not capture this notable feature, since the amount of time that the underlying process spends in its running maximum and running minimum always has measure zero in those models. To incorporate this feature, \cite{feng2020modeling} proposed a new financial model driven by a sticky maximum process, which can explain the feature of winning streaks that refer to the phenomenon in which asset values consistently increase over an uninterrupted period. Because their model is based on geometric Brownian motion, option prices are solvable analytically under the model. However, since the model in \cite{feng2020modeling} assumes that the volatility is constant, it does not account for other stylized facts of financial data such as the volatility smile. Besides, their  model does not consider the feature of losing streaks, which usually appear in economic downturns. Motivated by \cite{feng2020modeling}, we develop a new stochastic volatility model driven by sticky diffusion processes for derivatives pricing, which not only captures the features of winning and losing streaks in the financial market, but also allows the volatility to vary over time. We call our model stochastic volatility model with sticky drawdown and drawup processes (SVSDU model).

Explorations for sticky diffusion processes stemmed from \cite{feller1952parabolic}. Since then, sticky behavior of stochastic processes has been widely studied by academy, including \cite{harrison1981sticky}, \cite{graham1988martingale},  \cite{bass2014stochastic}, \cite{engelbert2014stochastic}, \cite{racz2015multidimensional}, \cite{grothaus2017stochastic}, \cite{salins2017markov}, etc. However, there have been relatively few applications of sticky processes in finance. Existing literature is limited on stochastic processes exhibiting sticky reflecting behavior in one dimension for financial applications. \cite{jiang2019some} modeled the price clustering effect by constructing a sticky diffusion process with sticky reflection at a fixed point and computed option values based on the sticky process. \cite{nie2020sticky} applied sticky reflecting Ornstein-Uhlenbeck diffusions to model the interest rate that follows a sticky reflecting behavior at zero. \cite{feng2020modeling} proposed the sticky drawdown process with sticky reflection at 1 to model the winning streak. The existence and uniqueness of the solution to the stochastic differential equation (SDE) for the sticky diffusions are studied in  \cite{graham1988martingale}, \cite{takanobu1988existence} and \cite{ikeda2014stochastic}. \cite{graham1988martingale} proved that there exists at least one solution to the multidimensional SDE system with more than one dimension exhibiting stickiness if sojourn without reflection is allowed at the boundary. 

In order to model both the winning and losing streaks, we apply the technique of time change in \cite{salins2017markov} to the drawdown and drawup processes to construct sticky diffusions, which exhibit stickiness in two dimensions. As a result, the asset dynamics in our model can stay at their running maximums and running minimums for a positive period of time. We derive a multidimensional SDE system for our sticky diffusions, where the volatility dynamic follows Cox–Ingersoll–Ross process. We use the technique of change of variables to prove the existence of the solutions to our SDE system with the theorem in \cite{graham1988martingale}.

Although the SVSDU model has desirable properties, it gives rise to nontrivial mathematical issues: it is quite challenging to obtain closed form solutions to no arbitrage option prices. To address this problem, we propose an unsupervised deep learning approach: we derive a partial differential equation (PDE) whose solution is the pricing function of a European option under the SVSDU model, and approximate the solution to the PDE by a deep neural network. There has been rapid developments in application of deep learning approaches in solving PDEs, including physics-informed neural networks (\cite{raissi2019physics}), Weak adversarial networks (\cite{zang2020weak}), random feature method (\cite{chen2022bridging}), and deep Galerkin method (\cite{sirignano2018dgm}). For application in financial PDEs such as the Black-Scholes PDE, see \cite{sirignano2018dgm} and \cite{glau2022deep}. Our neural network is designed based on the framework proposed by \cite{sirignano2018dgm}, which has demonstrated its approximation power in solving high-dimensional PDEs. This approach is efficient because the neural network generates option prices for a wide range of parameters within milliseconds after training. To assess the accuracy of our deep learning approach, we calculate the benchmarks of option prices by adapting the \cite{meier2023simulation} simulation algorithm to simulate the diffusion processes with sticky boundaries. Using the results from Monte Carlo method with a large number of paths, we show that the option prices generated by the deep neural network match the benchmarks.

We also calibrate parameters by fitting the model to option data, as option pricing formula is approximated by the deep neural network. There has been some work on neural network calibration for different financial models including \cite{horvath2021deep}, \cite{liu2019neural}, \cite{bayer2019deep} and \cite{romer2022empirical}. In this paper, we propose a calibration framework for the SVSDU model and show the calibration accuracy and efficiency on simulated data. The main challenge in the calibration is that since the input variables of the neural network such as asset price variables and strike price variables are within bounded domains when we train the neural network with the PDE's residuals as loss, the calibration performances may suffer if the magnitudes of the input data lie significantly outside the input domain of the network. So we develop an efficient approach to find a suitable scaling factor to standardize the input data if the input values are too large for the neural network during the calibration.

In empirical studies, we perform calibration tasks to SPX options in 2021 and 2022. To analyze the effect of individual stickiness factor and the joint effect of drawdown and drawup stickiness factors, we consider two new stochastic models: stochastic volatility model with sticky drawdown processes (SVSD model) that only considers the feature of winning streaks and stochastic volatility model with sticky drawup processes (SVSU model) that only considers the feature of losing streaks. Option pricing functions under the SVSD model and the SVSU model are approximated by the other two neural networks. We analyze the performances of the SVSDU model, SVSD model and the SVSU model in different market situations and compare them with the Heston model. The numerical result shows that  the SVSDU model, the SVSD model and the SVSU model perform better than the Heston model across all market scenarios both in-sample and out-of-sample. Moreover, the SVSDU model has the best fitting performance in the whole calibration period and the best prediction performance in the period when both the winning and losing streaks appear, while the SVSD model (SVSU model) achieves good prediction performances when the index values continue to rise (fall) in a period of time. Therefore, both the winning and losing streaks are accounted for in option values, and the SVSDU model is a good reflection of market data due to the fact that the asset values will not keep rising continuously, nor will they decline indefinitely.

The rest of this paper is organized as follows. In Section \ref{sec-modelsimu}, we introduce the SVSDU model, provide the simulation scheme for the model, and derive the corresponding pricing PDE. In Section \ref{sec-deep}, we show how to train a deep neural network to approximate the option pricing function under the SVSDU model and provide a calibration method for the model. Section \ref{sec-num} presents the numerical results of our pricing and calibration approach on simulated data. Section \ref{sec-emp} shows the calibration results of the SVSDU model on SPX option data in different market situations and compares it to three other models. Section \ref{sec-con} concludes this paper. All the proofs are collected in Appendix \ref{app-proof}. The detailed calibration algorithm is presented in Appendix \ref{app-cali}, and the SVSD model and SVSU model are shown in Appendix \ref{app-SVSDSVSU}.

\section{The Stochastic Volatility Model with Sticky Drawdown and Drawup processes}
\label{sec-modelsimu}
\subsection{The Model Setup}
\label{sec-SVSDU}
We start from the Black-Scholes model under a physical probability measure $\mathbf{P}$, i.e.
\[
d S_{t}=\mu S_{t} d t+\hat{\sigma} S_{t} d \widetilde{B}_{t},
\]
where $\mu\in \mathbb{R}$, $\hat{\sigma}\in \mathbb{R}_+$ and $\{\widetilde{B}_t, t\ge 0\}$ is a standard Brownian motion under $\mathbf{P}$. 
We consider the risk-adjusted asset value process, i.e. $\widetilde{S}_t=e^{-rt}S_t$, where $r$ is the yield rate of risk-free asset, to remove the long-term uptrend effect. The risk-adjusted value process satisfies
\[
	d \widetilde{S}_{t}=(\mu-r) \widetilde{S}_{t} d t+\hat{\sigma} \widetilde{S}_{t} d \widetilde{B}_{t} 
\]
The running maximum and running minimum of risk-adjusted asset value  $\widetilde{S}_t$ up to time $t$ are given by $\{\overline{S}_{t}, t\ge 0\}$ and $\{\underline{S}_{t}, t\ge 0\}$, where
\begin{align}
\overline{S}_{t}=\sup_{0\leq u\leq t} \widetilde{S}_u, \ \underline{S}_{t}=\inf_{0\leq u\leq t} \widetilde{S}_u.
\end{align}
So the drawdown process $D_{t}$ and drawup process $U_{t}$ can be defined as:
\begin{align}
D_{t}=\frac{\widetilde{S}_{t}}{\overline{S}_{t}}, \ U_{t}=\frac{\widetilde{S}_{t}}{\underline{S}_{t}}.
\end{align}
If $D_t=1$, the asset value achieves its running maximum and reports a record high. If $U_t=1$, a new record-breaking low appears in the asset value. According to \cite{feng2020modeling}, we can easily derive the SDE for $D_t$ and $U_t$:
\begin{align}\label{eq:drawdown and drawup process}
    \begin{cases}
        &d D_t=(\mu-r)D_tdt+\hat{\sigma} D_td\widetilde{B}_t-D_t dL_t^1(D)\\
        &d U_t=(\mu-r)U_tdt+\hat{\sigma} U_td\widetilde{B}_t+U_tdL_t^1(U)
    \end{cases}
\end{align}
where $L_t^1(D)$ is the local time of the process $D$ at 1 and $L_t^1(U)$ is the local time of the process $U$ at 1. 1 represents the upper reflecting barrier for $D$ and the lower reflecting barrier for $U$. The process $\{L_t, t\ge 0 \}$ only increases when $D$ reaches the upper reflecting barrier or $U$ touches the lower reflecting barrier, so $D$ and $U$ will be prevented from moving above and falling below 1 when they hit the barriers.

However, the processes defined in \eqref{eq:drawdown and drawup process} can not model the persistence of winning and losing streaks, because the occupation time of a process driven by a Brownian motion in a fixed point is known to be zero according to \cite{salins2017markov}. In order to address this problem,  we apply the technique of time change as described in \cite{feng2020modeling} to derive stochastic differential equations for the sticky drawdown process, sticky drawup process and the corresponding asset value process. We consider a random clock $R(t)$ defined as:
\begin{align}
R(t) = t + \xi L_t^1(D) + \eta L_t^1(U),
\end{align}
where $\xi >0$ is the drawdown stickiness coefficient and $\eta>0$ is the drawup stickiness coefficient.  Then the sticky drawdown process $D^\pm_t$ and sticky drawup process $U^\pm_t$ can be obtained by time-changing the original processes $D_t$ and $U_t$, i.e.
\begin{align}
D^\pm_t = D_{R^{-1}(t)},\ U^\pm_t = U_{R^{-1}(t)},
\end{align}
where $R^{-1}(t)=t-\xi L_{t}^{1}\left(D^{ \pm}\right)-\eta L_{t}^{1}\left(U^{ \pm}\right)$. Compared with the original processes, the sticky drawdown process $D^\pm_t$ and sticky drawup process $U^\pm_t$ spend a positive amount of time at 1 with similar arguments in \cite{feng2020modeling}. The corresponding risk-adjusted sticky value process $\widetilde{S}^\pm_t$ can also be obtained by time-changing the processes $\widetilde{S}_t$, i.e. 
\begin{align}
    \widetilde{S}^\pm_t = \widetilde{S}_{R^{-1}(t)}.
\end{align}
The running maximum and running minimum of the risk-adjusted sticky value process are
\[
\overline{M}^\pm_t=\sup_{0\le u\le t}\widetilde{S}^\pm_u,\ \underline{M}^\pm_t=\inf_{0\le u\le t}\widetilde{S}^\pm_u.
\]
Different from the original asset value process, $\widetilde{S}^\pm_t$ can stay in the running maximum or the running minimum for a long stretch of time. So the new process exhibits the phenomena of winning and losing streaks over a long period. Similar with \cite{feng2020modeling}, we can derive the SDE system for $D_t^\pm$ and $U_t^\pm$ under the constant volatility:
\begin{align}\label{eq:sticky-process-SDE}
	\begin{cases}
	&dD^\pm_t = \mathbf{1}_{\{ D^\pm_t \ne 1, U^\pm_t \ne 1 \}}\big((\mu-r) D^\pm_t dt + \hat{\sigma} D^\pm_t dB_t \big)  - D^\pm_t dL_t^1(D^\pm),\\
	&dU^\pm_t = \mathbf{1}_{\{ D^\pm_t \ne 1, U^\pm_t \ne 1 \}}\big((\mu-r) U^\pm_t dt + \hat{\sigma} U^\pm_t dB_t \big)  + U^\pm_t d L_t^1(U^\pm),\\
	&\int_{0}^{t} \mathbf{1}_{\{D^\pm_s = 1\}} ds  =  \xi L_t^1(D^\pm_t),\\
	&\int_{0}^{t} \mathbf{1}_{\{U^\pm_s = 1\}} ds  =  \eta L_t^1(U^\pm_t).
	\end{cases}	
	\end{align}
where $B_t=\widetilde{B}_{R^{-1}(t)}+\hat{B}_{t-R^{-1}(t)}$ and the Brownian motion $\{\hat{B}_t, t\ge 0\}$ is independent of $\{\widetilde{B}_t, t\ge 0\}$.

Empirical evidence from financial markets clearly shows that the volatility is not constant. Instead, it varies randomly in time. So under the physical measure $\mathbf{P}$, we propose the following stochastic volatility model with sticky drawdown and drawup process:
\begin{align}\label{eq:sticky-process-SDE}
	\begin{cases}
	&dD^\pm_t = \mathbf{1}_{\{ D^\pm_t \ne 1, U^\pm_t \ne 1 \}}\big((\mu-r) D^\pm_t dt + \sqrt{V_t} D^\pm_t dB_t \big)  - D^\pm_t dL_t^1(D^\pm),\\
	&dU^\pm_t = \mathbf{1}_{\{ D^\pm_t \ne 1, U^\pm_t \ne 1 \}}\big((\mu-r) U^\pm_t dt + \sqrt{V_t} U^\pm_t dB_t \big)  + U^\pm_t d L_t^1(U^\pm),\\
	&dV_t=\kappa(\theta-V_t)dt+\sigma\sqrt{V_t}dW_t,\\
	&\int_{0}^{t} \mathbf{1}_{\{D^\pm_s = 1\}} ds  =  \xi L_t^1(D^\pm_t),\\
	&\int_{0}^{t} \mathbf{1}_{\{U^\pm_s = 1\}} ds  =  \eta L_t^1(U^\pm_t).
	\end{cases}	
	\end{align}
 where $\sigma\in \mathbb{R}_+$, $\{B_t, t\ge 0\}$ and $\{W_t, t\ge 0\}$ are two standard Brownian motion under measure $\mathbb{P}$ with the correlation $\rho$. The variance $V_t$ follows CIR process, and the Feller condition $2\kappa\theta>\sigma^2$ is satisfied. The process $D^{\pm}_t$ and $U^{\pm}_t$ exhibit stickiness at one, while $V_t$ is not sticky and $V_t>0$. 
 \begin{theorem}\label{thm:existence-weak-sol-SDE}
     There exists at least one solution to the SDE system \eqref{eq:sticky-process-SDE}.
 \end{theorem}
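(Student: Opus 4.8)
The plan is to remove the multiplicative form of the reflection by a logarithmic change of variables, so that the transformed system becomes a sticky reflected diffusion of the type treated in \cite{graham1988martingale}, and then to invoke his existence theorem in the regime where sojourn at the boundary without reflection is permitted. An alternative, more constructive route would be to realize $D^\pm,U^\pm$ by time-changing ordinary reflected diffusions as in \cite{salins2017markov}, but the change-of-variables argument is the cleaner one to write down.

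Concretely, set $X_t=-\log D_t^\pm$ and $Y_t=\log U_t^\pm$. Since $D_t^\pm\in(0,1]$ and $U_t^\pm\in[1,\infty)$ by construction, $(X_t,Y_t)$ takes values in the quadrant $\overline{Q}=[0,\infty)^2$, with the faces $\{X=0\}$ and $\{Y=0\}$ being $\{D_t^\pm=1\}$ and $\{U_t^\pm=1\}$. Applying It\^o's formula to \eqref{eq:sticky-process-SDE}, using $d\langle D^\pm\rangle_t=\mathbf 1_{\{D_t^\pm\ne1,\,U_t^\pm\ne1\}}V_t(D_t^\pm)^2\,dt$ and the analogue for $U^\pm$, converts the multiplicative terms $-D^\pm\,dL_t^1(D^\pm)$ and $U^\pm\,dL_t^1(U^\pm)$ into \emph{additive} local-time terms $d\ell^X_t,d\ell^Y_t$ that push $X,Y$ back into $\overline{Q}$; turns the indicator into $\mathbf 1_{\{X\ne0,\,Y\ne0\}}$, so that drift and diffusion are switched off whenever $(X,Y)\in\partial Q$ --- exactly the sojourn-without-reflection feature --- and turns the last two lines of \eqref{eq:sticky-process-SDE} into the sticky relations $\int_0^t\mathbf 1_{\{X_s=0\}}\,ds=\xi\,\ell^X_t$ and $\int_0^t\mathbf 1_{\{Y_s=0\}}\,ds=\eta\,\ell^Y_t$. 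Appending the unchanged CIR equation, $(X,Y,V)$ then solves, on $\overline{Q}\times(0,\infty)$, a reflected martingale problem whose interior generator is the (degenerate) second-order operator read off from the transformed equations and whose conditions on the two boundary faces are the sticky (Wentzell-type) ones with coefficients $\xi$ and $\eta$; this is precisely the class for which \cite{graham1988martingale} establishes existence of at least one solution. Undoing the change of variables, $D^\pm=e^{-X}$, $U^\pm=e^{Y}$, and applying It\^o's formula once more, one recovers a solution of \eqref{eq:sticky-process-SDE} with $D^\pm\in(0,1]$, $U^\pm\in[1,\infty)$ and $V>0$ (the last by the Feller condition $2\kappa\theta>\sigma^2$), the local times $L^1(D^\pm),L^1(U^\pm)$ being read off from $\ell^X,\ell^Y$.

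The main obstacle is reconciling the hypotheses of \cite{graham1988martingale} with two features of the coefficients. First, they are unbounded --- the $(X,Y)$-block is driven by $\sqrt{V_t}$ and the volatility block by $\sigma\sqrt{V_t}$ --- whereas \cite{graham1988martingale} works with bounded coefficients. I would deal with this by localization: stop at $\tau_n=\inf\{t:V_t\notin[1/n,n]\}$, replace the coefficients outside $[1/n,n]$ by bounded modifications, solve the sticky reflected problem on $[0,\tau_n]$ via \cite{graham1988martingale}, and pass to the limit using $\tau_n\uparrow\infty$ almost surely (the CIR process is strictly positive under the Feller condition and does not explode); the delicate point is the consistent patching of the weak solutions across levels, which is routine in the martingale-problem formulation. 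Second, the diffusion matrix is genuinely degenerate --- in the $(X,Y)$-block it has rank one, since both coordinates are driven by the single Brownian motion $B$ --- so no uniformly elliptic version of the theorem is available; this is exactly why only existence of \emph{at least one} solution, rather than uniqueness, is asserted, and why the whole argument must run at the level of the martingale problem. A last, minor point is the corner $\{X=Y=0\}$, i.e. $D^\pm=U^\pm=1$, where the indicator kills all diffusive motion: since $d(X_t+Y_t)=d\ell^X_t+d\ell^Y_t\ge0$ (the drift and diffusion parts cancel), $X_t+Y_t$ is non-decreasing, so the corner can be at most an initial state, is never revisited once left, and the two faces may therefore be treated independently.
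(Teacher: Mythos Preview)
Your approach is correct in spirit and shares its opening move with the paper: both begin with the logarithmic change of variables $X=-\log D^\pm$, $Y=\log U^\pm$ to convert the multiplicative reflection into an additive one on the quadrant, and both aim to invoke the existence theorem of \cite{graham1988martingale}. The divergence is in how the unboundedness of the coefficients is handled. You propose localization at the stopping times $\tau_n=\inf\{t:V_t\notin[1/n,n]\}$ followed by patching; the paper instead applies two further changes of variables --- first $P=\widetilde{U}/(V+1)$, $Q=\widetilde{D}/(V+1)$, then $\widetilde P=\log(P+1)$, $\widetilde Q=\log(Q+1)$, $\widetilde V=\log(V+1)$ --- which render all drift and diffusion coefficients globally bounded and continuous, so that Graham's Theorem~I.14 applies directly without any limiting procedure. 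The paper's route is cleaner and entirely self-contained, while your localization is standard but leaves the patching step (which in the absence of uniqueness really requires a tightness argument for the laws of the localized solutions rather than literal ``patching'') somewhat implicit. On the other hand, your observation that $X_t+Y_t=\log(\overline M^\pm_t/\underline M^\pm_t)$ is non-decreasing, so the corner $\{X=Y=0\}$ is never revisited, is a nice structural point that the paper does not make explicit.
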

 The proof of the Theorem \ref{thm:existence-weak-sol-SDE} is shown in Appendix \ref{app-proof}.
 
 Similarly, the dynamic of the risk-adjusted sticky value process with both sticky running maximum and sticky running minimum under the physical measure is given by 
 \begin{align}
	\left\{\begin{array}{l}
	d \widetilde{S}_t^{\pm}= (\mu-r) \widetilde{S}_t^{\pm} \mathbf{1}_{\left\{\widetilde{S}_t^{\pm} \neq\left\{\overline{M}_t^{\pm}, \underline{M}_t^{\pm}\right\}\right\}} d t+\sqrt{V_t} \widetilde{S}_t^{\pm} \mathbf{1}_{\left\{\widetilde{S}_t^{\pm} \neq\left\{\overline{M}_t^{\pm}, \underline{M}_t^{\pm}\right\}\right\}} d B_t, \\
	dV_t=\kappa(\theta-V_t)dt+\sigma\sqrt{V_t}dW_t \\
	\int_0^t \mathbf{1}_{\left\{\widetilde{S}_t^{\pm}=\overline{M}_t^{\pm}\right\}} d s=\xi L_t^1\left(D^{\pm}\right) \\
	\int_0^t \mathbf{1}_{\left\{\widetilde{S}_t^{\pm}=\underline{M}_t^{\pm}\right\}} d s=\eta L_t^1\left(U^{\pm}\right) .
	\end{array}\right.
	\end{align}
 Since $D^\pm_t=\frac{\widetilde{S}_t^{\pm}}{\overline{M}^\pm_t}$ and $U^\pm_t=\frac{\widetilde{S}_t^{\pm}}{\underline{M}^\pm_t}$, we can derive the SDE for sticky running maximum process and sticky running minimum process:
 \begin{align}
 \left\{\begin{array}{l}
	d\overline{M}^\pm_t = \overline{M}^\pm_t dL_t^1(D^\pm),\\
	d\underline{M}^\pm_t = -\underline{M}^\pm_t dL_t^1(U^\pm).
 \end{array}\right.
\end{align}
Consequently, we can derive the SDE for the sticky asset value process $S_t^{\pm}$ defined by $S_t^{\pm}=e^{rt}\widetilde{S}^+_t$ under the physical measure:
\begin{align}
	\left\{\begin{array}{l}
	d S_t^{\pm}=r{S}_t^{\pm}dt+ (\mu-r) S_t^{\pm} \mathbf{1}_{\left\{{S}_t^{\pm} \neq\left\{\overline{S}_t^{\pm}, \underline{S}_t^{\pm}\right\}\right\}} d t+\sqrt{V_t} {S}_t^{\pm} \mathbf{1}_{\left\{{S}_t^{\pm} \neq\left\{\overline{S}_t^{\pm}, \underline{S}_t^{\pm}\right\}\right\}} d B_t, \\
	dV_t=\kappa(\theta-V_t)dt+\sigma\sqrt{V_t}dW_t \\
	\int_0^t \mathbf{1}_{\left\{{S}_t^{\pm}=\overline{S}_t^{\pm}\right\}} d s=\xi L_t^1\left(D^{\pm}\right) \\
	\int_0^t \mathbf{1}_{\left\{{S}_t^{\pm}=\underline{S}_t^{\pm}\right\}} d s=\eta L_t^1\left(U^{\pm}\right) ,
	\end{array}\right.
	\end{align}
where $\overline{S}_t^{\pm}=e^{rt}\overline{M}_t^{\pm}$ and $\underline{S}_t^{\pm}=e^{rt}\underline{M}_t^{\pm}$. It is easy to derive that 
\begin{align}
\left\{\begin{array}{l}
d \overline{S}_t^{ \pm}=r \overline{S}_t^{ \pm} d t+\overline{S}_t^{ \pm} d L_t^1\left(D^{ \pm}\right) \\
d \underline{S}_t^{ \pm}=r \underline{S}_t^{ \pm} d t-\underline{S}_t^{ \pm} d L_t^1\left(U^{ \pm}\right) .
\end{array}\right.
\end{align}

Under the risk-neutral measure $\mathbb{Q}$, the sticky drawdown process and sticky drawup process satisfy the following SDE:
\begin{align}
\left\{\begin{array}{l}d D_t^{ \pm}=\mathbf{1}_{\left\{D_t^{ \pm} \neq 1, U_t^{ \pm} \neq 1\right\}} \sqrt{V_t} D_t^{ \pm} d B^{\mathbb{Q}}_t-D_t^{ \pm} d L_t^1\left(D^{ \pm}\right), \\ d U_t^{ \pm}=\mathbf{1}_{\left\{D_t^{ \pm} \neq 1, U_t^{ \pm} \neq 1\right\}} \sqrt{V_t} U_t^{ \pm} d B^{\mathbb{Q}}_t+U_t^{ \pm} d L_t^1\left(U^{ \pm}\right), \\ dV_t=\kappa(\theta-V_t)dt+\sigma\sqrt{V_t}dW^{\mathbb{Q}}_t , \\  \int_0^t \mathbf{1}_{\left\{D_s^{ \pm}=1\right\}} d s=\xi L_t^1\left(D^{ \pm}\right), \\ \int_0^t \mathbf{1}_{\left\{U_s^{ \pm}=1\right\}} d s=\eta L_t^1\left(U^{ \pm}\right) ,\end{array}\right.
\end{align}
where $\{B^{\mathbb{Q}}_t, t\ge 0\}$ and $\{W^{\mathbb{Q}}_t, t\ge 0\}$ are two standard Brownian motion under measure $\mathbb{Q}$. Moreover, the sticky asset value process satisfy
\begin{align}
& \left\{\begin{array}{l}
d S_t^{ \pm}=r S_t^{ \pm} d t+\sqrt{V_t} S_t^{ \pm} \mathbf{1}_{\left\{S_t^{ \pm} \neq\left\{\overline{S}_t^{ \pm}, \underline{S}_t^{ \pm}\right\}\right\}} d B^{\mathbb{Q}}_t, \\
dV_t=\kappa(\theta-V_t^\pm)dt+\sigma\sqrt{V_t^\pm}dW^{\mathbb{Q}}_t \\
\int_0^t \mathbf{1}_{\left\{S_t^{ \pm}=\overline{S}_t^{ \pm}\right\}} d s=\xi L_t^1\left(D^{ \pm}\right) \\
\int_0^t \mathbf{1}_{\left\{S_t^{ \pm}=\underline{S}_t^{ \pm}\right\}} d s=\eta L_t^1\left(U^{ \pm}\right) .
\end{array}\right. \\
&
\end{align}
It follows that
\begin{align}
\left\{\begin{array}{l}
d \overline{S}_t^{ \pm}=r \overline{S}_t^{ \pm} d t+\overline{S}_t^{ \pm} d L_t^1\left(D^{ \pm}\right) \\
d \underline{S}_t^{ \pm}=r \underline{S}_t^{ \pm} d t-\underline{S}_t^{ \pm} d L_t^1\left(U^{ \pm}\right) .
\end{array}\right.
\end{align}

As a result of the time change, the asset value process has the desired property of persistent extremes, where the sets of periods of record highs and lows both have positive measures. Such a process is suitable for modeling financial data with both winning streaks and losing streaks.

\subsection{Simulation of The Sticky Drawdown/Drawup Processes}
\label{sec-simu}
Option prices under the SVSDU model can be obtained by Monte Carlo simulation. Our simulation method is based on the work in \cite{meier2023simulation}.
We consider the joint simulation of $\ln D_t^{ \pm}$, $\ln U_t^{ \pm}$, $\ln \bar{S}_t^{ \pm}$ and $\ln \underline{S}_t^{ \pm}$ since the log processes are easy to simulate: 
\begin{align}\label{eq:simulation-sticky-process}
   \left\{\begin{array}{l}d \ln D_t^{ \pm}=\mathbf{1}_{\left\{D_t^{ \pm} \neq 1, U_t^{ \pm} \neq 1\right\}}\left(-\frac{1}{2} V_t d t+\sqrt{V}_t d B_t^{\mathbb{Q}}\right)-d L_t^1\left(D^{ \pm}\right), \\ d \ln U_t^{ \pm}=\mathbf{1}_{\left\{D_t^{ \pm} \neq 1, U_t^{ \pm} \neq 1\right\}}\left(-\frac{1}{2} V_t d t+\sqrt{V}_t d B_t^{\mathbb{Q}}\right)+d L_t^1\left(U^{ \pm}\right), \\ d V_t=\kappa\left(\theta-V_t\right) d t+\sigma \sqrt{V_t} d W_t,  \\ \int_0^t \mathbf{1}_{\left\{D_s^{ \pm}=1\right\}} d s=\xi L_t^1\left(D^{ \pm}\right), \\ \int_0^t \mathbf{1}_{\left\{U_s^{ \pm}=1\right\}} d s=\eta L_t^1\left(U^{ \pm}\right),\\
   d \ln \bar{S}_t^{ \pm}=r d t+d L_t^1\left(D^{ \pm}\right), \\ d \ln \underline{S}_t^{ \pm}=r d t-d L_t^1\left(U^{ \pm}\right),
   \end{array}\right. 
\end{align}

where $\ln D_0^{ \pm}=\ln \frac{\widetilde{S}^{\pm}_0}{\overline{M}_0^{\pm}}$ and $\ln U_0^{ \pm}=\ln \frac{\widetilde{S}^{\pm}_0}{\underline{M}_0^\pm}$.
The SDE \eqref{eq:simulation-sticky-process} can be reformulated as:
\begin{align*}
	d \boldsymbol{X}_t= & \boldsymbol{\mu}\left(\boldsymbol{X}_t\right) I\left(\boldsymbol{X}_t \in \mathbb{S}\right) d t+\boldsymbol{\Sigma}\left(\boldsymbol{X}_t\right) I\left(\boldsymbol{X}_t \in \mathbb{S}\right) d \mathbf{B}_{1, t} \\ & +\hat{\boldsymbol{\beta}}\left(\boldsymbol{X}_t\right) I\left(\boldsymbol{X}_t \in \partial \mathbb{S}\right) d t+\hat{\boldsymbol{\Gamma}}\left(\boldsymbol{X}_t\right) I\left(\boldsymbol{X}_t \in \partial \mathbb{S}\right) d \boldsymbol{B}_{2, t}, 
\end{align*}
where $\boldsymbol{X}_t=(\ln D_t^{ \pm}, \ln U_t^{ \pm}, V_t, \ln \bar{S}_t^{ \pm}, \ln \underline{S}_t^{ \pm})^\top \in (-\infty,0]\times [0,\infty)\times \mathbb{R} \times \mathbb{R}\times \mathbb{R}$, $\mathbf{B}_{1, t}$ and $\mathbf{B}_{2, t}$ are two independent three-dimensional Brownian motions, and  
\[
\boldsymbol{\mu}\left(\boldsymbol{X}_t\right)=\left[\begin{array}{c}\sqrt{V_t} \\ \sqrt{V_t}  \\ \sigma \sqrt{V_t} \rho \\ 0 \\ 0 \end{array}\right], \ \hat{\boldsymbol{\beta}}\left(\boldsymbol{X}_t\right)=\left[\begin{array}{l}-\frac{1}{\xi}1_{\left\{\ln D_t^{ \pm}=0\right\}} \\ \frac{1}{\eta}1_{\left\{\ln U_t^{ \pm}=0\right\}} \\ \kappa\left(\theta-V_t\right) \\ \frac{1}{\xi}1_{\left\{\ln D_t^{ \pm}=0\right\}}+r \\ -\frac{1}{\eta}1_{\left\{\ln U_t^t=0\right\}}+r\end{array}\right]
\]

\[
\boldsymbol{\Sigma}\left(\boldsymbol{X}_t\right)=\left[\begin{array}{ccc}\sqrt{V_t} & 0 & 0 \\ \sqrt{V_t} & 0 & 0 \\ \sigma \sqrt{V_t} \rho & \sigma \sqrt{V_t} \sqrt{1-\rho^2} & 0 \\ 0 & 0 & 0 \\ 0 & 0 & 0\end{array}\right], \ \hat{\boldsymbol{\Gamma}}\left(\boldsymbol{X}_t\right)=\left[\begin{array}{ccc}0 & 0 & 0 \\ 0 & 0 & 0 \\ \sigma \sqrt{V_t} & 0 & 0 \\ 0 & 0 & 0 \\ 0 & 0 & 0\end{array}\right].
\]
The diffusion lives on $\overline{\mathbb{S}}=\mathbb{S}\ \cup\ \partial \mathbb{S}$, where 
\[
\mathbb{S}=\left\{\boldsymbol{x} \in \mathbb{R}^5: \Phi(\boldsymbol{x})< 0\right\} , \partial \mathbb{S}=\left\{\boldsymbol{x} \in \mathbb{R}^5: \Phi(\boldsymbol{x})=0\right\}
\]
for some $\Phi \in C_b^2\left(\mathbb{R}^5\right)$. $\Phi$ is given by $\Phi(\boldsymbol{x})=\prod_{i=1}^{\hat{d}} (1-e^{-x_i})$, where $\hat{d}$ is the number of dimensions with stickiness and $\hat{d}=2$ (see Section 2, \cite{meier2023simulation}).

We use eigendecomposition approach proposed by \cite{meier2023simulation} to construct a CTMC for simulation of a multidimensional diffusion with sticky boundaries. Let $\boldsymbol{A}=\boldsymbol{\Sigma} \boldsymbol{\Sigma}^{\top}$ and $\hat{\boldsymbol{G}}=\hat{\boldsymbol{\Gamma}} \hat{\boldsymbol{\Gamma}}^{\top}$. They can be written as:
$$
\boldsymbol{A}=\sum_{i=1}^d \lambda_i \boldsymbol{u}_i \boldsymbol{u}_i^{\top}, \ \hat{\boldsymbol{G}}=\sum_{i=1}^d \hat{\lambda}_i \boldsymbol{u}_i^{\hat{\boldsymbol{G}}}(\boldsymbol{u}_i^{\hat{\boldsymbol{G}}})^{\top},
$$
where $\boldsymbol{u}_i$, $\boldsymbol{u}_i^{\hat{\boldsymbol{G}}}$ are the normalized eigenvectors associated with $\lambda_i$ and $\hat{\lambda}_i$. Since the drift vector and the eigenvectors give the direction along which the process varies, we use them as directions to move CTMC. Specifically, the directions are
\begin{align*}
	M(\boldsymbol{x})= \begin{cases}\{\boldsymbol{\mu}(\boldsymbol{x})\} \cup\left\{\boldsymbol{u}_i(\boldsymbol{x}),-\boldsymbol{u}_i(\boldsymbol{x}): i=1, \ldots, d\right\}, & \text { if } \boldsymbol{x} \in \mathbb{S}, \\ \{\hat{\boldsymbol{\beta}}(\boldsymbol{x})\} \cup\left\{\boldsymbol{u}_i^{\hat{\boldsymbol{G}}}(\boldsymbol{x}),-\boldsymbol{u}_i^{\hat{\boldsymbol{G}}}(\boldsymbol{x}): i=1, \ldots, d\right\}. & \text { if } \boldsymbol{x} \in \partial \mathbb{S}\end{cases}
\end{align*}

As for the step size $h$, we need to adjust it if the process $\boldsymbol{x}\in \mathbb{S}$ is close to the boundary since the original size may lead the CTMC out of the boundary. For the minimum distance of $\boldsymbol{x}$ to $\partial \mathbb{S}$ along transition direction $\boldsymbol{u}$, we have 
\[
h=\min \left\{\big|\frac{x_i}{u_i} \big|, h :  i=1, \cdots, \hat{d} \right\},
\]
where $u_i$ is the i-th element of the transition direction $\boldsymbol{u}$. The adjusted step sizes are the same for $\boldsymbol{u}$ and $-\boldsymbol{u}$, but they are different for different eigendirections and the drift direction.

We can determine the transition rates by matching the behaviors of the drift and diffusion part of the SDE. For $\boldsymbol{x}\in \mathbb{S}$, we have
\begin{align*}
	& \sum_{i \in M(\boldsymbol{x})} a_{i, h \boldsymbol{v}_i}^d(\boldsymbol{x}) h \boldsymbol{v}_i(\boldsymbol{x})=\boldsymbol{\mu}(\boldsymbol{x}) \\
	& \sum_{i \in M(\boldsymbol{x})} a_{i, h\boldsymbol{v}_i}^n(\boldsymbol{x}) h^2 \boldsymbol{v}_i(\boldsymbol{x}) \boldsymbol{v}_i^{\top}(\boldsymbol{x})=\boldsymbol{A}(\boldsymbol{x}) \\
	& \sum_{i \in M(\boldsymbol{x})} a_{i, h \boldsymbol{v}_i}^n(\boldsymbol{x}) h \boldsymbol{v}_i(\boldsymbol{x})=0,
\end{align*}
where $\boldsymbol{v}_i(\boldsymbol{x})\in M(\boldsymbol{x})$. By solving the above set of equations, we can obtain the transition rates:
$$\begin{array}{lll}a_{h \boldsymbol{\mu}}^d(\boldsymbol{x})=\frac{1}{h}, & a_{i, \pm h \boldsymbol{u}_i}^d(\boldsymbol{x})=0, & \text { for } i=1, \ldots, d, \\ a_{h \boldsymbol{\mu}}^n(\boldsymbol{x})=0, & a_{i,-h \boldsymbol{u}_i}^n(\boldsymbol{x})=a_{i, h \boldsymbol{u}_i}^n(\boldsymbol{x})=\frac{\lambda_i}{2 h^2}, & \text { for } i=1, \ldots, d.\end{array}$$
Similarly, for $\boldsymbol{x} \in \partial \mathbb{S}$, we replace $\boldsymbol{\mu}(\boldsymbol{x})$ with $\hat{\boldsymbol{\beta}}(\boldsymbol{x})$ and replace $\boldsymbol{A}(\boldsymbol{x})$ with $\hat{\boldsymbol{G}}(\boldsymbol{x})$ in the set of moment matching equations and obtain the corresponding transition rates.

After we construct the CTMC, we can generate paths of stock price processes that exhibit persistent extremes. If $\boldsymbol{x}\in \mathbb{S}$, we have the following steps:

\begin{itemize}
    \item STEP 1: Calculate $a_0$: $a_0 =-a_{\delta_{\boldsymbol{\mu}} \boldsymbol{\mu}}-\sum_{i=0}^{d-1}\left(a_{i,-\delta_i \boldsymbol{u}_{i+1}}+a_{i, \delta_i \boldsymbol{u}_{i+1}}\right)$;
    \item STEP 2: Generate a random variable $e$ from the exponential distribution with mean $\frac{1}{|a_0|}$, and set $t=t+\min\left\{e, T-t\right\}$;
    \item STEP 3: Generate an independent random variable $U$ from the uniform distribution over $[0,1]$, and sample the index $i$:
$$i = \min \left\{j=0, \ldots, 2d: U<\sum_{k=0}^j p_k\right\},$$
where $p_k\in \left[\frac{a_{\delta_{\boldsymbol{\mu}} \boldsymbol{\mu}}}{\left|a_0\right|}, \frac{a_{i,-\delta_i \boldsymbol{u}_{i+1}}}{\left|a_0\right|}, \frac{a_{i, \delta_i\boldsymbol{u}_{i+1}}}{\left|a_0\right|}\right.$ for $i=$
                $\left. 0, \ldots, d-1 \right]; $
    \item STEP 4: Generate the next state $\boldsymbol{y}$ until the maturity $T$:
\[
\boldsymbol{y}=\boldsymbol{x}+\boldsymbol{u}[:,i],
\]
where $\mathbf{u}=\left[\delta_{\boldsymbol{\mu}} \boldsymbol{\mu},-\delta_j \mathbf{u}_{j+1}, \delta_j \mathbf{u}_{j+1}\right.$ for $\left.j=0, \ldots, d-1\right]$.
\end{itemize}

If $\boldsymbol{x}\in \partial\mathbb{S}$, we repeat the above steps with transition rates obtained from $\hat{\boldsymbol{\beta}}(\boldsymbol{x})$ and $\hat{\boldsymbol{G}}(\boldsymbol{x})$.

Figure \ref{fig:dd} and \ref{fig:du} display the sample paths of the CTMC approximating the drawdown process, drawup process and corresponding stock price processes generated by our simulation method. From Figure \ref{fig:dd}, we can see that during the middle time period, the drawdown process exhibits stickiness at one and the stock price continues to rise during that period. From Figure \ref{fig:du}, the drawup process exhibits stickiness at one in different time periods and we can observe record lows for a long stretch of time in the corresponding stock price process. 
\begin{figure}[h]
    \centering
    \includegraphics[width=0.49\textwidth]{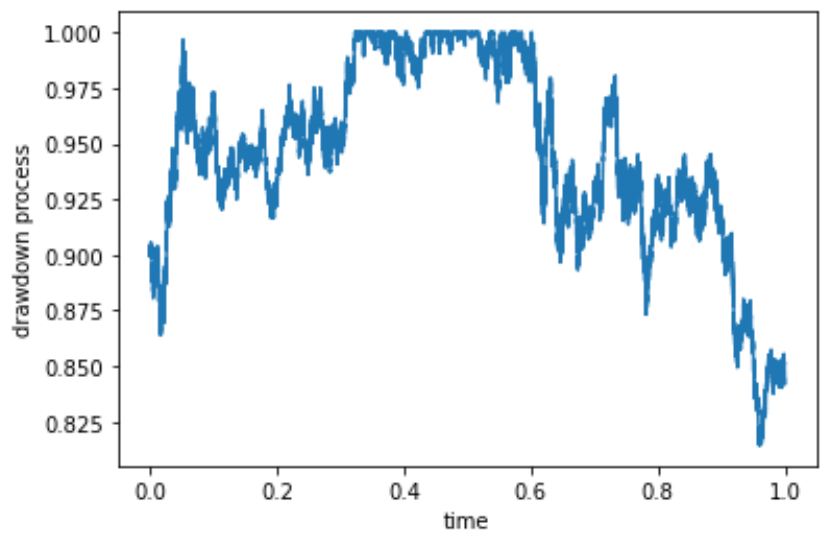}
    \includegraphics[width=0.49\textwidth]{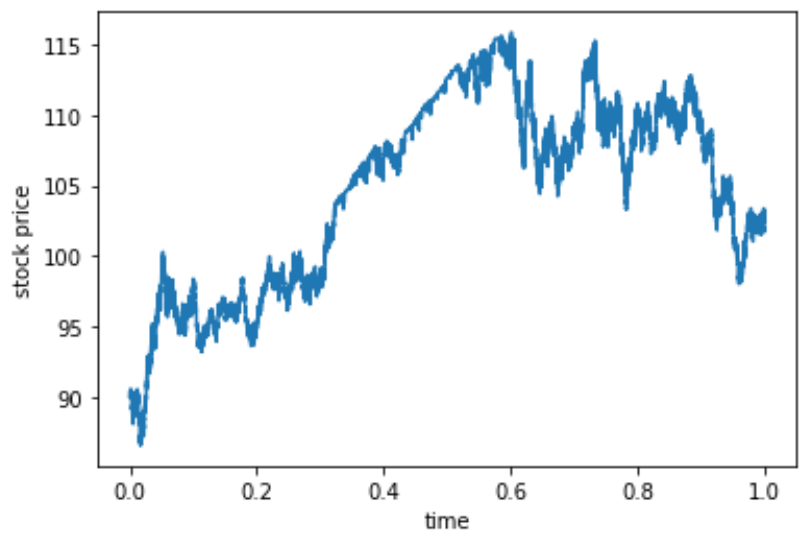}
    \caption{The left panel shows the simulation of the drawdown process $D_t^{\pm}$, and the right panel displays the corresponding stock price process.}
    \label{fig:dd}
\end{figure}
\begin{figure}[h]
    \centering
    \includegraphics[width=0.49\textwidth]{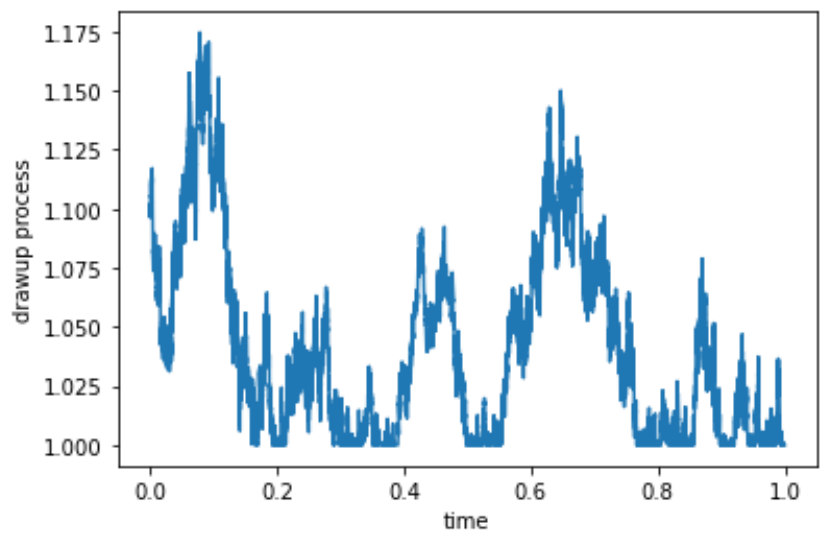}
    \includegraphics[width=0.49\textwidth]{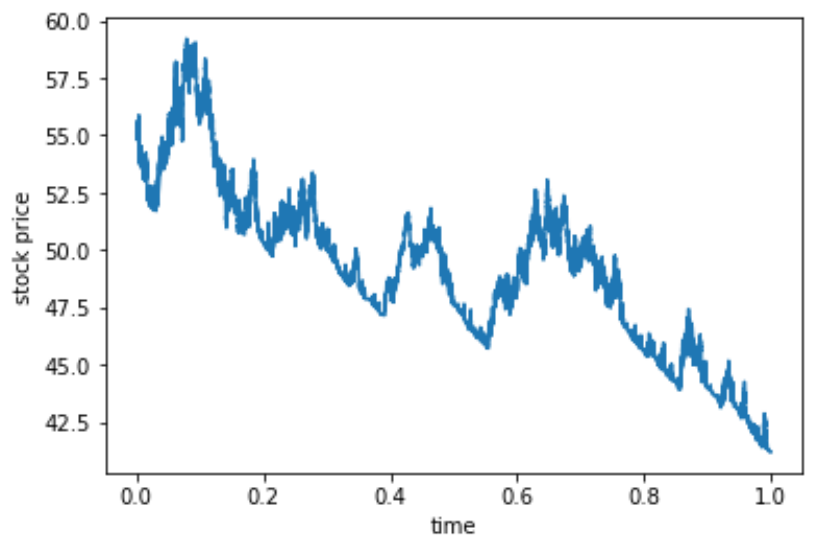}
    \caption{The left panel shows the simulation of the drawup process $U_t^{\pm}$, and the right panel displays the corresponding stock price process.}
    \label{fig:du}
\end{figure}

\subsection{The Pricing PDE}\label{sec:pricing-PDE}
Option prices under the SVSDU model can also be computed by solving the corresponding pricing PDE. For the combined sticky extreme asset process, the price of the European call option under the risk neutral measure is
\begin{align}
	P(t, x, y, z, v;\Phi)=\mathbb{E}^{\mathbb{Q}}\left[e^{-r(T-t)}\left(S_T^{\pm}-K\right)^{+} \mid S_t^{\pm}=x, \overline{S}_t^{\pm}=y, \underline{S}_t^{\pm}=z, V_t=v\right],
\end{align}
where $t$ is the current time, $r$ is the risk-free rate, $x$ is the asset value at time $t$, $y$ is the running maximum of the asset value, $z$ is the running minimum of the asset value, $v$ is the volatility, $T$ is the maturity date and $K$ is the strike price. $\Phi$ is the vector of model parameters. Specifically, $\Phi=(K,r,\rho,\kappa, \theta,\sigma,\eta,T,\xi)\in \mathcal{P}$, where $\mathcal{P}$ is the compact parameter domain. 
So for $0\le t < T$, $0 < z < x < y $, it is shown in Appendix \ref{app-proof} that by Feynman-Kac Theorem, we have the pricing PDE: 
\begin{align}
	\begin{array}{l}
	\frac{1}{2} v x^2 P_{x x}+\rho \sigma v x P_{x v}+\frac{1}{2} \sigma^2 v P_{v v}+r x P_x \\
	+\kappa(\theta-v) P_v+r y P_y+r z P_z+P_t=r P, \quad (t,x,y,z,v)\in \mathcal{Q}.
	\end{array}
\end{align}
Define $\mathcal{L}P=\frac{1}{2} v x^2 P_{x x}+\rho \sigma v x P_{x v}+\frac{1}{2} \sigma^2 v P_{v v}+r x P_x+\kappa(\theta-v) P_v+r y P_y+r z P_z$, then the pricing PDE can be rewritten as:
\[
P_t+\mathcal{L}P-rP=0.
\]
The boundaries are at $x = y$ and $x=z$, and the boundary conditions are,
\begin{align}
	\begin{array}{l}
	P_y(t,y,y,z,v;\Phi)=\left[\frac{1}{2} v y P_{x x}(t,y,y,z,v;\Phi)+\rho \sigma v P_{x v}(t,y,y,z,v;\Phi)\right] \xi, \quad (t,x,y,z,v)\in \Sigma_1 \\
	P_z(t,z,y,z,v;\Phi)=-\left[\frac{1}{2} v z P_{x x}(t,z,y,z,v;\Phi)+\rho \sigma v P_{x v}(t,z,y,z,v;\Phi)\right] \eta, \quad (t,x,y,z,v)\in \Sigma_2.
	\end{array}
\end{align}
The terminal condition is
\begin{align}
	P(T, x, y, z, v;\Phi) = (x-K)^+, \quad (t, x,y,z,v)\in \Omega,
\end{align}
where $0< z\le x\le y$.

We will show how to approximate the solution to the above PDE by deep neural networks in the next section.

\section{Pricing and Calibration with Deep Neural Networks}
\label{sec-deep}
\subsection{Loss Function of The Neural Network}

To train a deep neural network for approximating the solution to the pricing PDE under the SVSDU model, we need to choose an appropriate loss function. We define the deep learning solution as $P^{\theta}(t,x,y,z,v;\Phi)$, which approximates $P(t,x,y,z,v;\Phi)$. $\theta\in \mathbb{R}^d$ contains all trainable network parameters. We can construct the loss function as follows:
\begin{align}\label{eq:loss-function}
\mathcal{J}(P^{\theta})=\omega_{in}\mathcal{J}_{in}(P^{\theta})+\omega_{bc}\mathcal{J}_{bc}(P^{\theta})+\omega_{te}\mathcal{J}_{te}(P^{\theta}),
\end{align}
where
\[
\mathcal{J}_{in}(P^{\theta})=\big|\mathcal{P}\times \mathcal{Q}  \big|^{-1}\int_{\mathcal{P}}\int_{\mathcal{Q}}(P^{\theta}_t+\mathcal{L}P^{\theta}-rP^{\theta})^2 \diff{(t,x,y,z,v)} \diff{\Phi},
\]
\begin{align*}
\mathcal{J}_{bc}(P^{\theta})=\big|\mathcal{P}\times \Sigma_1 \big|^{-1}\int_\mathcal{P}\int_{\Sigma_1}\big[P^{\theta}_y-(\frac{1}{2}vyP^{\theta}_{xx}+\rho\sigma vP^{\theta}_{xv})\xi \big]^2\diff{(t,x,y,z,v)} \diff{\Phi}\\
+\big|\mathcal{P}\times \Sigma_2 \big|^{-1}\int_\mathcal{P}\int_{\Sigma_2}\big[P^{\theta}_z+(\frac{1}{2}vzP^{\theta}_{xx}+\rho\sigma vP^{\theta}_{xv})\eta \big]^2\diff{(t,x,y,z,v)} \diff{\Phi},
\end{align*}
and
\[
\mathcal{J}_{te}(P)=\big|\mathcal{P}\times \Omega \big|^{-1}\int_{\mathcal{P}}\int_{\Omega}\big[P^{\theta}(T,x,y,z,v; \Phi)-(x-K)^+ \big]^2 \diff{(x,y,z,v)}\diff{\Phi}.
\]

If $\mathcal{J}(P^{\theta})=0$, then $P^{\theta}$ is the solution of the PDE. So our goal is to find a set of network parameters $\theta$ such that $\mathcal{J}(P^{\theta})$ is sufficiently small and $P^{\theta}\approx P$. Because the integrals in \eqref{eq:loss-function} are hard to evaluate directly, we numerically evaluate those integrals by Monte–Carlo quadrature with sample points uniformly generated from the domains:
\begin{align*}
\mathcal{J}_{in}(P^{\theta})\approx &\sum_{i=1}^{N} (P^{\theta}_t(t^{(i)}, x^{(i)}, y^{(i)}, z^{(i)}, v^{(i)}; \Phi^{(i)})+\mathcal{L}P^{\theta}(t^{(i)}, x^{(i)}, y^{(i)}, z^{(i)}, v^{(i)}; \Phi^{(i)})\\
&-r^{(i)} P^{\theta}(t^{(i)}, x^{(i)}, y^{(i)}, z^{(i)}, v^{(i)}; \Phi^{(i)})^2/N,   
\end{align*}
\begin{align*}
\mathcal{J}_{bc}(P^{\theta})\approx &\sum_{i=1}^{N}\big[P^{\theta}_y(t^{(i)}, x^{(i)}, y^{(i)}, z^{(i)}, v^{(i)}; \Phi^{(i)})-(\frac{1}{2}v^{(i)} y^{(i)} P^{\theta}_{xx}(t^{(i)}, x^{(i)}, y^{(i)}, z^{(i)}, v^{(i)}; \Phi^{(i)})\\
&+\rho^{(i)} \sigma^{(i)} v^{(i)} P^{\theta}_{xv}(t^{(i)}, x^{(i)}, y^{(i)}, z^{(i)}, v^{(i)}; \Phi^{(i)}))\xi^{(i)} \big]^2/N\\
&+\sum_{i=1}^{N}\big[P^{\theta}_z(t^{(i)}, x^{(i)}, y^{(i)}, z^{(i)}, v^{(i)}; \Phi^{(i)})+(\frac{1}{2}v^{(i)}z^{(i)}P^{\theta}_{xx}(t^{(i)}, x^{(i)}, y^{(i)}, z^{(i)}, v^{(i)}; \Phi^{(i)})\\
&+\rho^{(i)}\sigma^{(i)} v^{(i)} P^{\theta}_{xv}(t^{(i)}, x^{(i)}, y^{(i)}, z^{(i)}, v^{(i)}; \Phi^{(i)}))\eta^{(i)} \big]^2/N,
\end{align*}
\begin{align*}
    \mathcal{J}_{te}(P)\approx \sum_{i=1}^N\big[P^{\theta}(T,x^{(i)},y^{(i)},z^{(i)},v^{(i)}; \Phi^{(i)})-(x^{(i)}-K^{(i)})^+ \big]^2/N.
\end{align*}

\subsection{Structure of The Neural Network}
Our neural network structure is based on the framework proposed by \cite{sirignano2018dgm}, which has been proven effective in solving a range of quasilinear parabolic partial differential equations.

As for the structure of our neural network, the first layer is the input layer:
	\begin{align}
	S^{1} & =\sigma_1\left(\vec{x}W^{1} +b^{1}\right), 
	\end{align}
	where $\vec{x}=(t,x,y,z,v,K,r,\rho, \kappa,\theta,\sigma,\eta,T, \xi)$, $W^{1}\in \mathbb{R}^{d\times m}$, $b^{1}\in \mathbb{R}^{m}$. $\sigma_1(\cdot)$ is the activation function and $\sigma_1(x)=tanh(x)=\frac{e^x-e^{-x}}{e^x+e^{-x}}$.
	
	For $\ell=1, \ldots, L$, where $L$ is the number of layers, we have the hidden layer
	\begin{align}
	S^{\ell+1} & =\left(1-G^{\ell}\right) \odot H^{\ell}+Z^{\ell} \odot S^{\ell}, \quad \ell=1, \ldots, L,
	\end{align}
	where $\odot$ denotes element-wise multiplication. $G^{\ell}, H^{\ell}$ and $Z^{\ell}$ are "sub-layers" of each hidden layer of the neural network. The structure of $G^{\ell}$ and $Z^{\ell}$ are the same:
	\begin{align}
	Z^{\ell} & =\sigma_2\left(\vec{x}U^{z, \ell} +S^{\ell}W^{z, \ell} +b^{z, \ell}\right), \quad \ell=1, \ldots, L, \\
	G^{\ell} & =\sigma_2\left(\vec{x} U^{g, \ell} +S^{\ell} W^{g, \ell} +b^{g, \ell}\right), \quad \ell=1, \ldots, L, \\
	\end{align}
	where $U^{z, \ell}\in \mathbb{R}^{d\times m}$, $U^{g, \ell}\in \mathbb{R}^{d\times m}$, $W^{z, \ell}\in \mathbb{R}^{m\times m}$, $W^{g, \ell}\in \mathbb{R}^{m\times m}$, $b^{z, \ell} \in \mathbb{R}^m$, and $b^{g, \ell}\in \mathbb{R}^m$. As for $H^{\ell}$, the structure is:
	\begin{align}
	H^{\ell} & =\sigma_2\left(\vec{x} U^{h, \ell} +\left(S^{\ell} \odot R^{\ell}\right) W^{h, \ell}+b^{h, \ell}\right), \quad \ell=1, \ldots, L, \\
	\end{align}
	where
	\begin{align}
	R^{\ell} & =\sigma_2\left(\vec{x} U^{r, \ell} +S^{\ell} W^{r, \ell} +b^{r, \ell}\right), \quad \ell=1, \ldots, L. \\
	\end{align}
	In this case $U^{h, \ell} \in \mathbb{R}^{d\times m}$, $U^{r, \ell} \in \mathbb{R}^{d\times m}$, $W^{h, \ell} \in \mathbb{R}^{m\times m}$, $W^{r, \ell}\in \mathbb{R}^{m\times m}$, $b^{h, \ell} \in \mathbb{R}^m$ and $b^{r, \ell}\in \mathbb{R}^m$. The activation function $\sigma_2(x)=tanh(x)=\frac{e^x-e^{-x}}{e^x+e^{-x}}$.
	
	As for the last layer, which generates approximation for the PDE solution, the structure is:
	\begin{align}
	P^{\theta}(\vec{x}) & =S^{L+1} W +b,
	\end{align}
	where $W\in \mathbb{R}^{m\times 1}$, $b \in \mathbb{R}$. 

There is one input layer, four hidden layers ($L=4$) and one output layer in our neural network. Besides, there are 110 nodes on each hidden layer.  The activation function $\sigma_1(x), \sigma_2(x)\in \mathcal{C}^n$ for any $n\in \mathbb{N}$. So according to Theorem 2 in \cite{horvath2021deep}, we can guarantee that the neural network can approximate the first and the second order derivatives of the target function.

\subsection{Network Training}
\label{subsubse-train}
The training process includes two steps: sample training points from the domain and optimize the network parameters by minimizing the loss function.  

There are 14 input variables in the neural network. We set $\omega_{in}=\omega_{te}=\frac{1}{4}$ and $\omega_{bc}=\frac{1}{2}$ in \eqref{eq:loss-function}. 400000 training samples are uniformly sampled from each domain: $\mathcal{P}\times \mathcal{Q}$, $\mathcal{P}\times \Sigma_1$, $\mathcal{P}\times \Sigma_2$ and $\mathcal{P}\times \Omega$. So the total size of the training set is 1600000. We sample $\Phi=(K, r, \rho, \kappa, \theta, \sigma, \eta, T, \xi)$ uniformly on $\mathcal{P}$: $(K, r, \rho, \kappa, \theta, \sigma, \eta, T, \xi)\in \mathcal{U}[50, 131)\times \mathcal{U}[0.01, 0.3)\times \mathcal{U}[-1,1)\times \mathcal{U}[0.01, 5)\times \mathcal{U}[0.01,1)\times \mathcal{U}[0.01,\sqrt{10})\times \mathcal{U}[0.01,10)\times \mathcal{U}[\frac{7}{365}, 1.1)\times \mathcal{U}[0.01,10)$. The Feller condition is satisfied such that $2\kappa\theta>\sigma^2$ when we sample the variables $\kappa$, $\theta$ and $\sigma$. In the domain $\mathcal{Q}$, the variables $t$, $x$, $y$, $z$, $v$ can be sampled in the following way to ensure that $0\le z<x<y$:

\begin{itemize}
    \item STEP 1: Sample a new variable $m$ uniformly from the region $[50, 131)$;
    \item STEP 2: Sample $z$ and $y$ uniformly from the region $[1,m-1)$ and $[m+1,m+100)$;
    \item STEP 3: Sample $x$ uniformly from the region $[z+1, y)$;
    \item STEP 4: Sample $t$ and $v$ uniformly from the region $[0, T)$ and $[0.01, 0.16)$.
\end{itemize}

Similarly, when we sample $(t,x,y,z,v)$ in the domain $\Sigma_1$ and $\Sigma_2$, we set $x=y$ and $x=z$ in the step 3 and the remaining steps are the same. As for the domain $\Omega$, we set $t=T$ in the step 4 and the remaining steps are the same. 

The ranges of the training sample variables $x$, $y$, $z$ and $K$ should not be too large, since a large range of those input variables will lead to a large training loss of the neural network, causing erratic updates of the network weights and poor convergence of the neural network. Besides, in order to avoid vanishing gradients and accelerate training process, we transform the domains of all the 14 input variables linearly to $[-1,1]$.

As for the optimization step, we use ADAM algorithm in \cite{kingma2014adam} to update parameters of the neural network. In order to help the neural network optimization algorithms converge more effectively, we decrease the learning rate during training and the learning rate is a piecewise constant function of the number of iterations: 
$$\alpha_n= \begin{cases}10^{-3} & n \leq 1,000 \\ 5 \times 10^{-4} & 1,000<n \leq 2,000 \\ 10^{-4} & 2,000<n \leq 3,000 \\ 5 \times 10^{-5} & 3,000<n \leq 4,000 \\ 10^{-5} & 4,000<n \leq 5,000 \\ 5 \times 10^{-6} & 5,000<n \leq 5,500 \\ 10^{-6} & 5,500<n.\end{cases}$$
We use 6000 iterations and the batch size is 10000.
 
\subsection{The Calibration Steps}
\label{subsec-cali-step}

After we train the neural network, we can calibrate parameters of the model approximated by the deep neural network. In the calibration process, we need to solve the following optimization problem on day $d$:
$$\hat{\phi}:=\underset{\phi \in \Theta}{\operatorname{argmin}} \sum_{i=1}^{N_d}\left(P^{\theta}(t, x_d, K^d_i, T^d_i; \phi)-P^{M K T}\left(t, x_d, K^d_i, T^d_i\right)\right)^2,\quad d=1, 2, \cdots, n,$$
where $x_d$ represents underlying asset price on day $d$, $K^d_i$ and $T^d_i$ represent strike price and maturity date on day $d$ for the $i$-th option, $P^{\theta}$ is the neural network approximation for the option price, $P^{M K T}$ is the market price of the option, and $\phi=(\rho, \kappa, \theta, \sigma, \eta, v, \xi, y, z)$ is the set of parameters that we need to calibrate. 

When we train the neural network to approximate the SVSDU model, input variables such as asset prices and strike prices are restricted to a bounded domain. However, in the financial market, the magnitudes of the underlying asset prices and strike prices often significantly exceed this domain, adversely affecting the calibration results. If we perform calibration tasks with the market data, it is essential to ensure that the magnitudes of the input data values align with the magnitudes of the training samples used to train the neural network. So we standardize the market data by dividing the underlying asset prices, the strike prices and the corresponding option prices by the same real number during calibration. The optimization problem thus becomes:
\begin{align}\label{eq:modified-optimization-problem}
\hat{\phi}:=\underset{\phi \in \Theta}{\operatorname{argmin}} \sum_{i=1}^{N_d}\left(P^{\theta}(t, \frac{x_d}{l_d}, \frac{K_i}{l_d}, T_i; \phi)-\frac{P^{M K T}\left(t, x_d, K_i, T_i\right)}{l_d}\right)^2,\quad d=1, 2, \cdots, n,
\end{align}
where $\phi=(\rho, \kappa, \theta, \sigma, \eta, v, \xi, \frac{y}{l_d}, \frac{z}{l_d})$, and $l_d$ is the scaling factor on day $d$. In our experiments, we set $l_d=\frac{x_d}{C_{l_d}}$, where $C_{l_d}$ is a predetermined value within the range of $(0, 232)$. We use Levenberg-Marquadt algorithm (see \cite{gavin2019levenberg}) to solve the above minimization problem and obtain parameter set $\hat{\phi}$. The details of the Levenberg-Marquadt algorithm can be seen in Appendix \ref{app-cali}.

Since we will obtain different $\hat{\phi}$ with different $l_d$, we need to find the optimal value of $l_d$ during calibration to find the optimal $\hat{\phi}$. As $l_d$ decreases or increases, $\frac{x_d}{l_d}$ and $\frac{K_i}{l_d}$ will approach the boundaries of the range of input variables of the neural network, resulting in a worse performance of the neural network in calibration. So in order to find the optimal $l_d$, we can find the optimal $C_{l_d}$ by following steps:

\begin{itemize}
    \item STEP 1: Set the initial value of $C_{l_d}$;
    \item STEP 2: Increase or decrease the value of $C_{l_d}$ with a step size $h$ and use the Levenberg-Marquadt algorithm to calibrate the neural network model until we find an $C_{l_d}$ that the calibration error with the scaling factor $l_d=\frac{x_d}{C_{l_d}}$ is smaller than the calibration errors with scaling factors $l_d=\frac{x_d}{C_{l_d}-h}$ and $l_d=\frac{x_d}{C_{l_d}+h}$ respectively or the number of searches reaches the predefined maximum number of times;
    \item STEP 3: Record the optimal value of $C_{l_{d}}$. 
\end{itemize}

In our experiment, we set $h=10$. Since we perform the calibration tasks on a daily basis, we set the initial value of $C_{l_d}$ as the optimal value of $C_{l_{d-1}}$ on day $d-1$ to accelerate the calibration algorithm.

\section{Numerical Results on Simulated Data}
\label{sec-num}
In this section, we test the performances of our models with simulated data generated by the simulation method in section \ref{sec-simu}. We perform the numerical experiments on a workstation with two CPUs of Intel Xeon Platinum 8171 2.6 GHz and a GPU of NVDIA RTX6000. 

\subsection{Pricing Accuracy and Speed}
In this part, we demonstrate the pricing accuracy and speed of our neural network as the approximation of the true pricing functionals under the SVSDU model. We compute the option prices with different input parameters by Monte Carlo method and the deep neural network. All of the input values are within the ranges of the training samples of the neural networks that are shown in section \ref{subsubse-train}. As for the Monte Carlo method, we generate 60000 paths of CTMC approximating the asset value process to compute the option price for one parameter set, and we set the step size $h=\frac{1}{100}$. We compute Monte Carlo prices across 1219 random parameter combinations and use them as benchmarks. The average percentage error of the  neural network prices is 0.3596$\%$, and the average absolute error is 0.02734, which are small. The average percentage error (APE) and the average absolute error (AAE) are calculated as follows:
\[
\text{APE}=\frac{\sum^N_{i=1}\big|P^{\theta}(t,x,K_i,T_i;\phi)-P^{MKT}(t,x,K_i,T_i)  \big|}{\sum^N_{i=1}P^{MKT}(t,x,K_i,T_i)},
\]
\[
\text{AAE}=\frac{\sum^N_{i=1}\big|P^{\theta}(t,x,K_i,T_i;\phi)-P^{MKT}(t,x,K_i,T_i)  \big|}{N}.
\]
where $P^{MKT}(t,x,K_i,T_i)$ is the market price of the $i$-th option, and the $P^{\theta}(t,x,K_i,T_i;\phi)$ is the corresponding option price generated by our neural network. 
Figure \ref{fig-com} shows that the option prices given by the deep neural network match the values from the Monte Carlo method with different parameter sets. The accuracy of the neural network in pricing guarantees its accuracy in the model calibration.      
\begin{figure}[!h]
		\small
		\centering
		\subfigure[$T=0.3, t=0, y=101, z=49,r=0.04, K=70, v=0.05, \rho=-0.3, \sigma=0.4, \kappa=3, \theta=0.05, \xi=3, \eta=0.7$]{
			\begin{minipage}[t]{0.5\linewidth}
				\centering
				\includegraphics[width=3.2in]{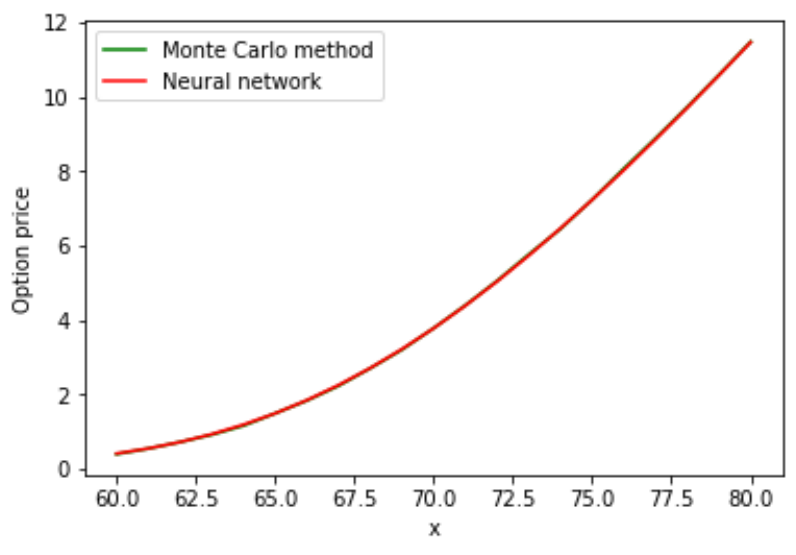}
			\end{minipage}%
            \hspace{0.25cm}
		}%
		\subfigure[$T=0.3, t=0, y=120, z=60,r=0.04, K=85, v=0.03, \rho=-0.3, \sigma=0.4, \kappa=3, \theta=0.06, \xi=2, \eta=0.9$]{
			\begin{minipage}[t]{0.5\linewidth}
				\centering
				\includegraphics[width=3.2in]{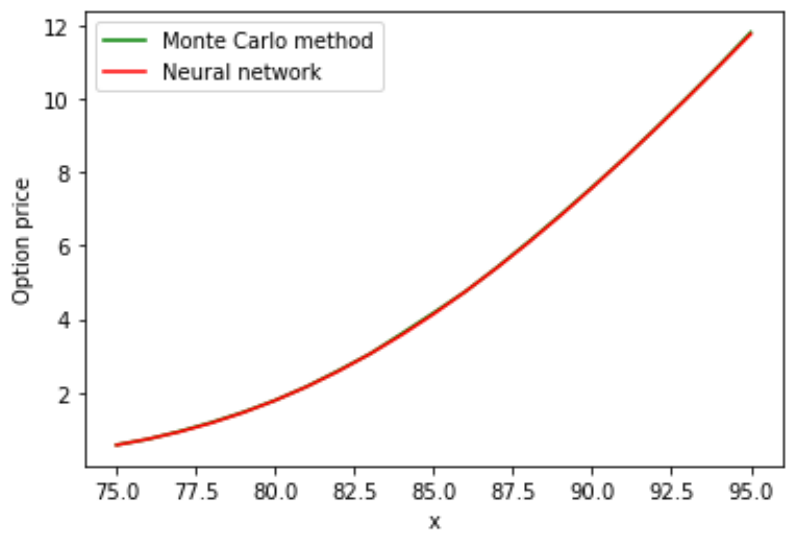}
			\end{minipage}%
		}%
  
  \subfigure[$T=0.36, t=0, y=120, z=60,r=0.06, K=90, v=0.05, \rho=-0.7, \sigma=0.6, \kappa=3, \theta=0.07, \xi=4, \eta=2$]{
			\begin{minipage}[t]{0.5\linewidth}
				\centering
				\includegraphics[width=3.2in]{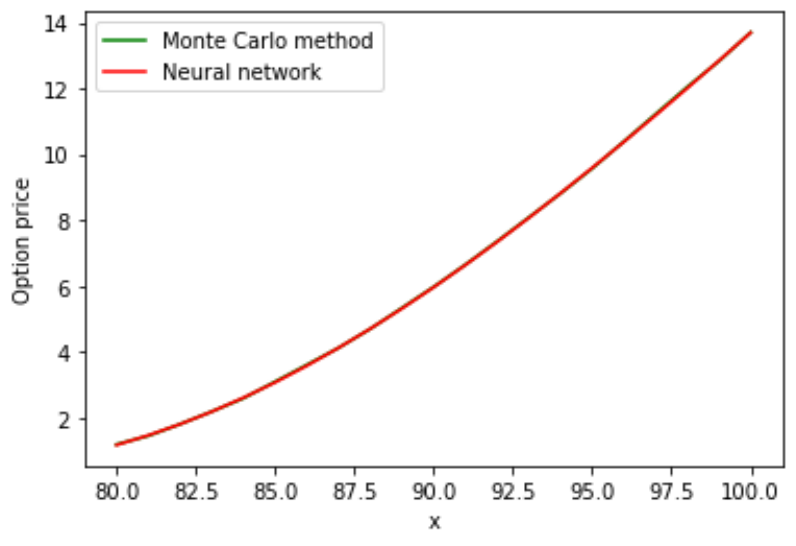}
			\end{minipage}%
   \hspace{0.25cm}
		}%
  \subfigure[$T=0.45, t=0, y=120, z=60,r=0.07, K=85, v=0.03, \rho=-0.3, \sigma=0.4, \kappa=3, \theta=0.05, \xi=3, \eta=0.9$]{
			\begin{minipage}[t]{0.5\linewidth}
				\centering
				\includegraphics[width=3.2in]{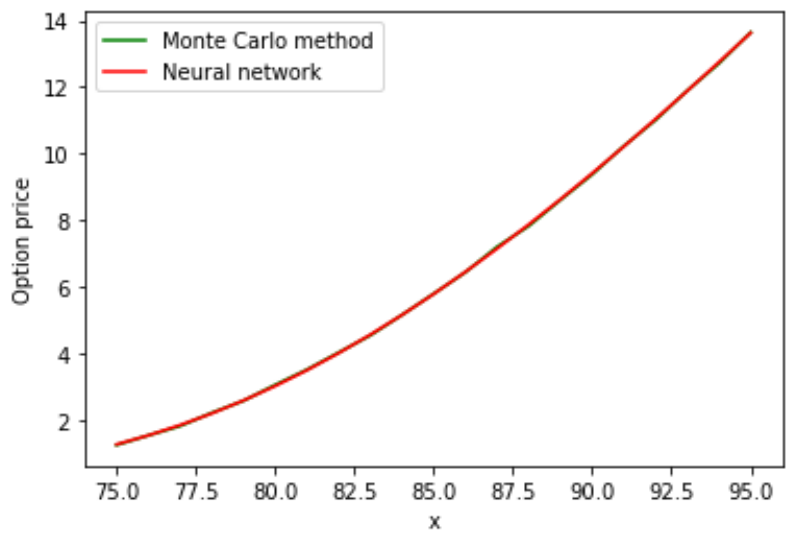}
			\end{minipage}%
		}%

  \subfigure[$T=0.24, t=0, y=100, z=50,r=0.025, K=75, v=0.05, \rho=-0.6, \sigma=0.1, \kappa=1.2, \theta=0.04, \xi=2.5, \eta=4$]{
			\begin{minipage}[t]{0.5\linewidth}
				\centering
				\includegraphics[width=3.2in]{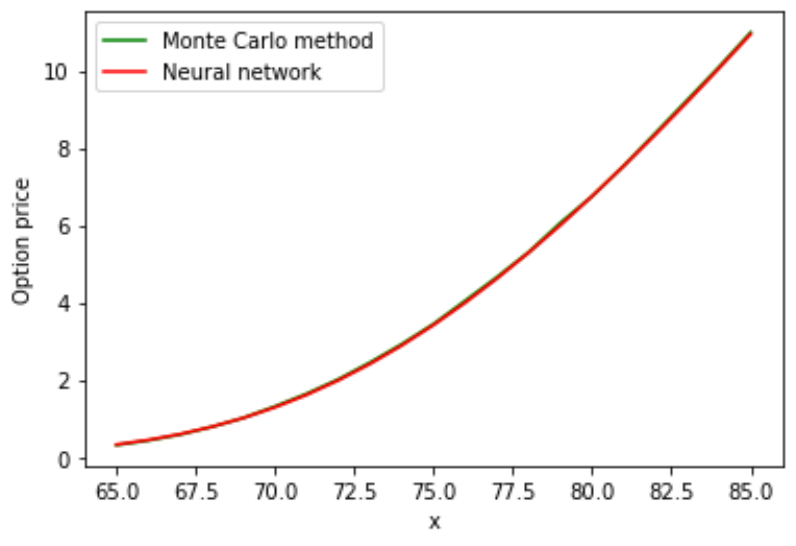}
			\end{minipage}%
   \hspace{0.25cm}
		}%
  \subfigure[$T=0.32, t=0, y=100, z=40, r=0.03, K=70, v=0.06, \rho=-0.6, \sigma=0.5, \kappa=3, \theta=0.08, \xi=5, \eta=2$]{
			\begin{minipage}[t]{0.5\linewidth}
				\centering
				\includegraphics[width=3.2in]{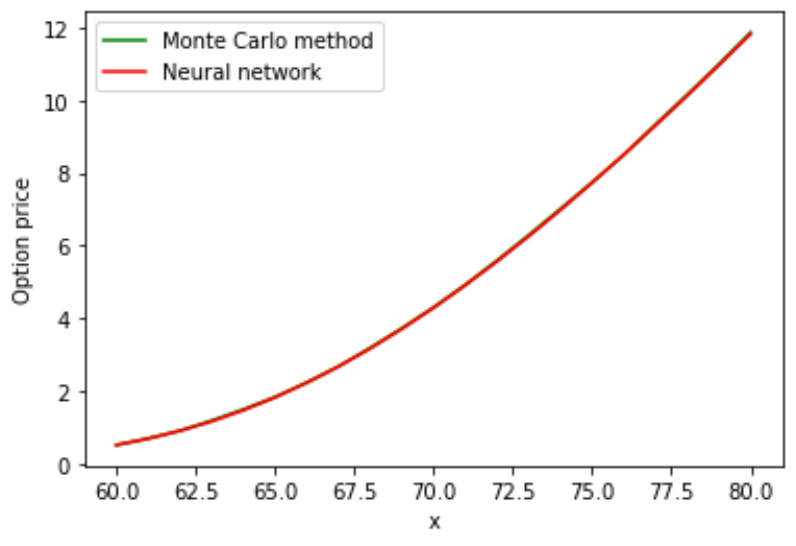}
			\end{minipage}%
		}%
  \centering
  \caption{Comparison of European option prices under the SVSDU model given by the neural network and those given by the Monte Carlo method.}
		\label{fig-com}
	\end{figure}

Table \ref{tab-time} shows the computational time required for pricing a European option using Monte Carlo method and neural networks under the SVSDU model. We consider time to maturity $\tau=0.3$, $\tau=0.2$ and $\tau=0.1$. The numbers of paths used in the Monte Carlo method are 1,000, 6,000, 10,000, and 60,000, respectively. We can see that compared with the Monte Carlo method, the neural network has a significant advantage in terms of speed in pricing options under the SVSDU model.

\begin{table}[h]
\resizebox{\linewidth}{!}{
\begin{tabular}{lccccc}
\hline
& \begin{tabular}[c]{@{}c@{}}Monte Carlo \\ method (1000 paths)\end{tabular} & \begin{tabular}[c]{@{}c@{}}Monte Carlo \\ method (6000 paths)\end{tabular} & \begin{tabular}[c]{@{}c@{}}Monte Carlo \\ method (10000 paths)\end{tabular} & \begin{tabular}[c]{@{}c@{}}Monte Carlo \\ method (60000 paths)\end{tabular} & Neural network \\ \hline
\multicolumn{1}{c}{$\tau$=0.3} & 21s       & 123s   & 206s  & 1241s     & 0.0181s        \\ \hline
$\tau$=0.2   & 15s                                          & 89s     & 144s     & 872s    & 0.0180s        \\ \hline
$\tau$=0.1    & 8s    & 48s    & 78s     & 469s                  & 0.0181s        \\ \hline
\end{tabular}}
\caption{Computational time required for pricing an option via neural networks and the Monte Carlo method}
\label{tab-time}
\end{table}

\subsection{Calibration Accuracy and Speed}

In order to asses the calibration accuracy of our method, we calibrate the SVSDU model to simulated data and compute the average absolute error between the calibrated model parameter $\phi^*$ and the corresponding parameter $\bar{\phi}$ that was chosen for the generation of the synthetic data:
\[
\text{Error}=\frac{|\phi^*-\bar{\phi}|}{n},
\]
where $n$ is the number of test cases.

We sample the true model parameters within the input range of our neural network and generate option prices as test cases with those parameters. So we do not need to find the optimal scaling factor. Starting from the initial guess of model parameters $\phi_0$, we obtain the calibrated model parameters $\phi^*$ and compute the average absolute error.

Table \ref{tab-cali-err} shows the performance of our calibration method. We find that the average absolute errors for all of the calibrated parameters are small, and the calibration result for $v^*$ is the most precise among all parameters. Compared to previous research, the calibration of the SVSDU model with our neural network is more complex. For one thing, there are 9 model parameters that need to be calibrated. For another, calibrating to the option prices will result in a great disparity in sensitivity of different parameters compared to the implied volatility according to \cite{liu2019neural}, making the calibration problem increasingly complex. So the results reported in Table \ref{tab-cali-err} can be considered satisfactory. Besides, the average CPU time for the calibration process is 164.2 milliseconds, which demonstrates the efficiency of our calibration algorithm.

\begin{table}[h]
\resizebox{\linewidth}{!}{
\begin{tabular}{ccccccccc}
\hline
$|\rho^*-\bar{\rho}|$  & $|\kappa^*-\bar{\kappa}|$ & $|\theta^*-\bar{\theta}|$ & $|\sigma^*-\bar{\sigma}|$ & $|\eta^*-\bar{\eta}|$  & $|v^*-\bar{v}|$        & $|\xi^*-\bar{\xi}|$    & $|DD^*-\bar{DD}|$      & $|DU^*-\bar{DU}|$      \\ \hline
$1.56\times 10^{-2}$ & $6.51\times 10^{-2}$    & $8.34\times 10^{-3}$    & $1.99\times 10^{-2}$    & $4.07\times 10^{-2}$ & $1.66\times 10^{-3}$ & $3.20\times 10^{-2}$ & $1.10\times 10^{-2}$ & $7.81\times 10^{-2}$ \\ \hline
\end{tabular}}
\caption{Average absolute errors between the calibrated model parameter and the true model parameters. Since option prices depend on the drawdown ratio $DD$ and the drawup ratio $DU$, we assess the calibration accuracy of $DD$ and $DU$ instead of $y$ and $z$}
\label{tab-cali-err}
\end{table}

\section{Empirical Studies}
\label{sec-emp}

\subsection{Calibration with historical data}
In this section, we perform calibration tasks using neural network approximators. To explore the joint effect of the drawdown and drawup stickiness factors, we propose two other models as comparisons, the Sticky Drawdown Stochastic Volatility (SVSD) model with only drawdown stickiness factor and the Sticky Drawup Stochastic Volatility (SVSU) model with only drawup stickiness factor (see Appendix \ref{app-SVSDSVSU}). Similar with the SVSDU model, we train two neural networks to approximate the pricing functions under the SVSD model and the SVSU model, respectively. We also compare the calibration results with the Heston model. We use Fourier methods to price options under the Heston model, and then calibrate the model to historical data.

We want to explore whether our models perform differently in different market situations, so we consider SPX European call options in 2021 and 2022. Figure \ref{fig-spx-index} shows daily closing values of S$\&$P 500 index in 2021 and 2022\footnote{https://finance.yahoo.com/quote/\%5ESPX/history/}. In 2021, most of time  the S$\&$P 500 index value reported record highs. While in 2022, both winning and losing streaks appeared frequently, and there was a period of significant and sustained decline in the index in the first half of 2022. 

\begin{figure}[!h]
		\small
		\centering
		 \subfigure[S$\&$P 500 index closing values in 2021.]{
			\begin{minipage}[t]{0.5\linewidth}
				\centering
				\includegraphics[width=3.0in]{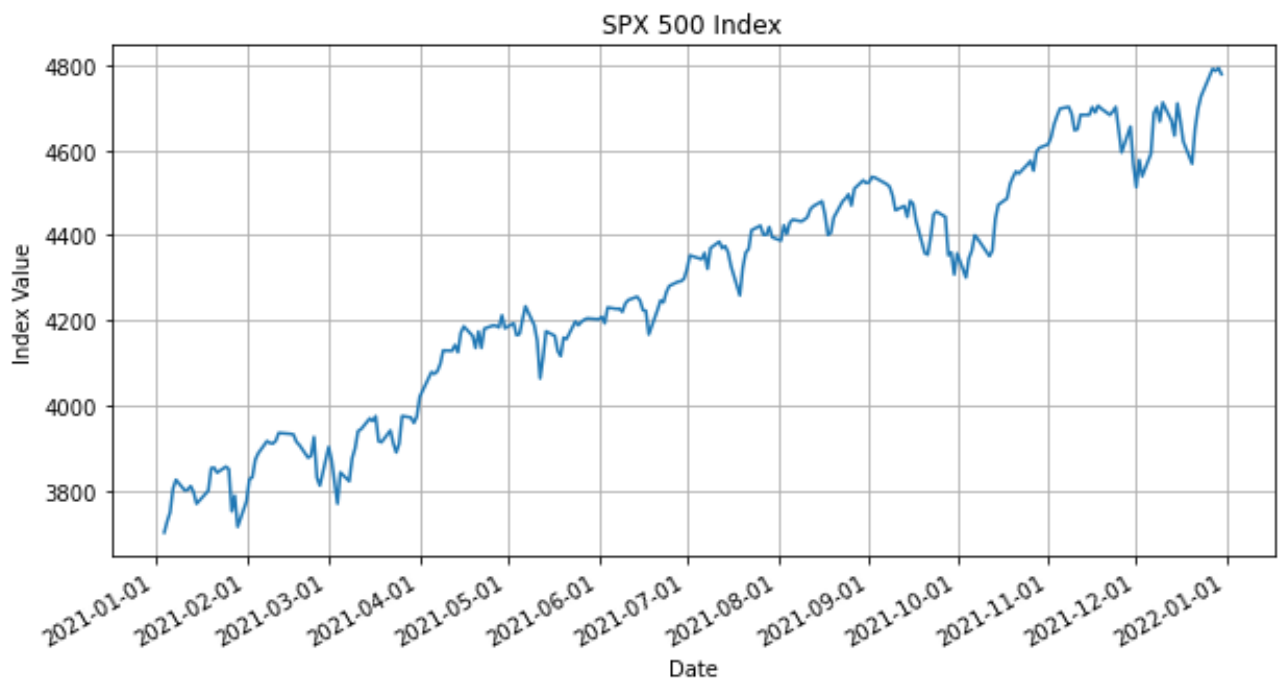}
			\end{minipage}%
            \hspace{0.25cm}
		}%
  \subfigure[S$\&$P 500 index closing values in 2022.]{
			\begin{minipage}[t]{0.5\linewidth}
				\centering
				\includegraphics[width=3.0in]{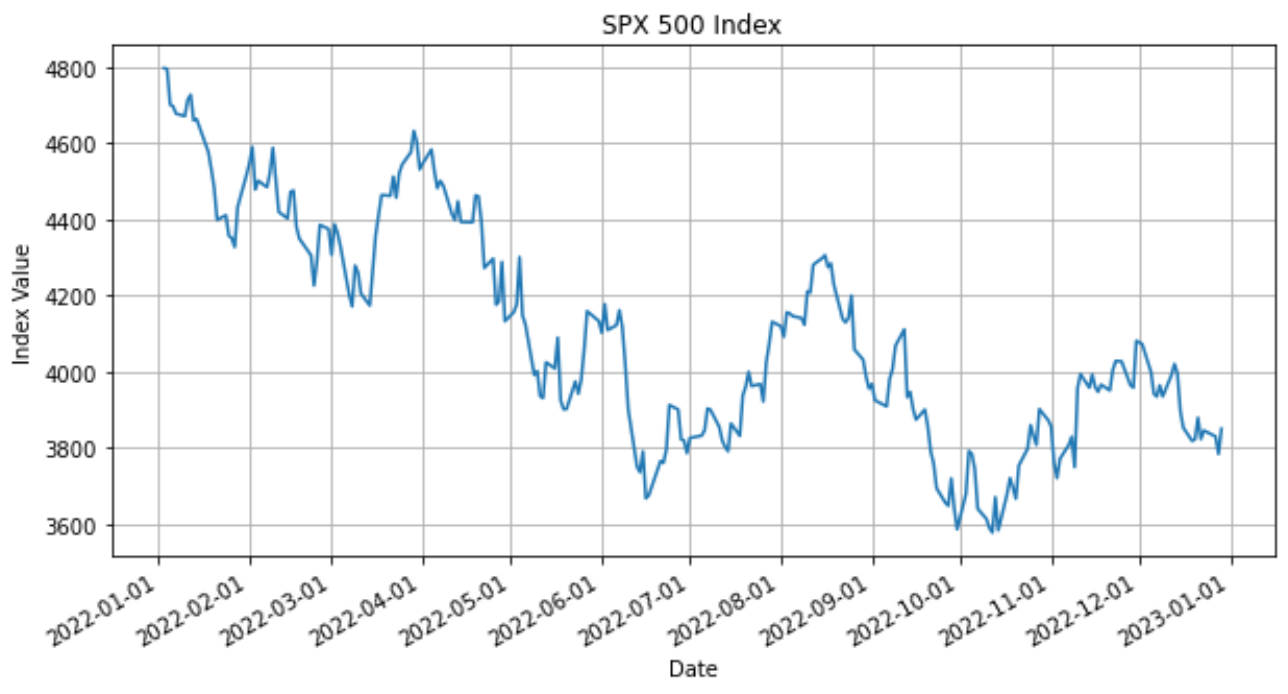}
			\end{minipage}%
   \hspace{0.25cm}
		}%
\caption{S$\&$P 500 index closing values in 2021 and 2022.}
\label{fig-spx-index}
        \end{figure}

We apply some filters to construct the option data for calibration. First, we choose options with maturities more than 6 days and less than one year for calibration. We standardize the time to maturities by dividing each of them by 365. Second, to mitigate the impact of price discreteness on option valuation, options with price quotes lower than \$1 are not considered (see \cite{eraker2004stock}). Finally, we choose quotes that satisfy the arbitrage restriction (see \cite{bakshi1997empirical}):
\[
P^{MKT}(t, x, K, T)\ge \max(0, x-Ke^{-r(T-t)}).
\]

The sample properties of the option data in two different periods are described in Table \ref{tab-sample}, which shows the statistics for the average bid-ask mid-point price, the average  effective bid-ask spread (ask price minus the bid-ask mid-point) which are shown in parentheses, and the total number of observations for each moneyness-maturity category (in braces) in different periods. There are a total of 2970505 call option observations in all of the periods, with ITM and ATM options respectively taking up 51.57\% and 22.42\% of the total sample. 

\begin{table}[!h]
\begin{center}
\begin{tabular}{lcccccc}
\hline
                         \multicolumn{1}{l}{}  & \multicolumn{1}{l}{} & \multicolumn{1}{l}{}                                    & \multicolumn{3}{c}{Days-to-Expiration}                                                         & \multicolumn{1}{l}{}         \\ \cline{4-6}
\multicolumn{1}{c}{Sampling period}  & \multicolumn{1}{l}{} & \begin{tabular}[c]{@{}c@{}}Moneyness\\ S/K\end{tabular} & \multicolumn{1}{c}{\textless{}60} & \multicolumn{1}{c}{60-180} & \multicolumn{1}{c}{$\ge$ 180} & \multicolumn{1}{c}{Subtotal} \\ \hline
04/01/2021-31/12/2021  & OTM                  & \textless{}0.94                                         & \$3.74                            & \$19.02                     & \$41.38                       &                              \\
&                      &                                                         & (0.14)                             & (0.35)                      & (1.64)                         &                         \\
                         &                      &                                                         & \{20554\}                             & \{82533\}                      & \{47690\}                         & \{150777\}                        \\
                         &                      & 0.94-0.97                                               & \$9.64                            & \$59.37                    & \$149.74                       &                              \\
                         &                      &                                                         & (0.19)                             & (0.52)                      & (2.85)                          &                         \\
                         &                      &                                                         & \{58782\}                             & \{53597\}                      & \{11206\}                          & \{123585\}                        \\
                         & ATM                  & 0.97-1.00                                               & \$33.11                           & \$118.75                    & \$221.097                      &                              \\
                         &                      &                                                         & (0.28)                             & (0.63)                      & (3.39)                          &                         \\
                         &                      &                                                         & \{99585\}                           & \{53063\}                       & \{10598\}                           & \{163246\}                         \\
                         &                      & 1.00-1.03                                               & \$106.57                             & \$198.20                   & \$300.93                      &                              \\
                         &                      &                                                         & (0.55)                             & (0.91)                      & (3.83)                          &                         \\
                         &                      &                                                         & \{99323\}                             & \{50144\}                      & \{9920\}                          & \{159387\}                        \\
                         & ITM                  & 1.03-1.06                                               & \$207.72                            & \$287.12                   & \$384.23                      &                              \\
                         &                      &                                                         & (1.12)                             & (1.26)                      & (4.04)                          &                         \\
                         &                      &                                                         & \{89738\}                             & \{47201\}                      & \{9381\}                          & \{146320\}                        \\
                         &                      & $\ge$ 1.06                                              & \$651.53                            & \$741.60                   & \$957.03                      &                              \\
                         &                      &                                                         & (2.64)                             & (2.88)                      & (5.53)                         &                        \\
                         &                      &                                                         & \{307983\}                             & \{337734\}                      & \{97978\}                         &   \{743695\}                     \\
                         & Subtotal             &                                                         & \{675965\}                            & \{624272\}                     &\{186773\}              & \{1487010\}                       \\
                     03/01/2022-30/12/2022                 & OTM                  & \textless{}0.94                                         & \$8.01                            & \$30.09                    & \$64.54                       &                              \\
                     &                      &                                                         & (0.14)                             & (0.34)                      & (1.14)                         &                        \\
                      &                      &                                                         & \{91751\}                             & \{177210\}                      & \{79484\}                         & \{348445\}                       \\
                     &                      & 0.94-0.97                                               & \$25.19                           & \$109.88                   & \$235.78                      &                              \\
                     &                      &                                                         & (0.24)                             & (0.59)                      & (2.11)                          &                         \\
                     &                      &                                                         & \{86146\}                              & \{52702\}                       & \{10872\}                           & \{149720\}                         \\
                     & ATM                  & 0.97-1.00                                               & \$61.15                           & \$172.37                   & \$306.33                      &                              \\
                     &                      &                                                         & (0.34)                             & (0.69)                      & (2.32)                          &                         \\
                     &                      &                                                         & \{110924\}                             & \{51567\}                      & \{10229\}                          & \{172720\}                        \\
                     &                      & 1.00-1.03                                               & \$128.12                          & \$245.06                   & \$379.66                      &                              \\
                     &                      &                                                         & (0.73)                             & (0.89)                      & (2.59)                          &                         \\
                     &                      &                                                         & \{111335\}                            & \{49641\}                      & \{9702\}                          & \{170678\}                       \\
                     & ITM                  & 1.03-1.06                                               & \$214.26                          & \$323.46                   & \$454.63                      &                              \\
                     &                      &                                                         & (2.03)                              & (1.60)                      & (3.00)                          &                         \\
                     &                      &                                                         & \{93805\}                               & \{46738\}                       & \{9172\}                           & \{149715\}                         \\
                     &                      & $\ge$ 1.06                                              & \$523.86                          & \$642.32                   & \$927.37                      &                              \\
                     &                      &                                                         & (3.35)                             & (4.05)                     & (5.83)                         &                        \\
                     &                      &                                                         & \{178523\}                              & \{230121\}                      & \{83573\}                          & \{492217\}                        \\
                     & Subtotal             &                                                         & \{672484\}                               & \{607979 \}                      & \{203032 \}                         &\{1483495 \}                        \\\hline
\end{tabular}
\end{center}
\caption{Sample properties of S\&P 500 index options in three different periods}
\label{tab-sample}
\end{table}

As for the in-sample calibration, on each day $d$ we obtain $P^{MKT}$ by:
\[
P_i^{MKT}=\frac{P_i^{Bid}+P_i^{Ask}}{2}, 
\]
where $P_i^{Bid}$ is the bid price and $P_i^{Ask}$ is the ask price of the $i$-th option. Interest rates are obtained from the daily one year treasury  yield data published by the Federal Reserve Board based on the average yield of a range of Treasury securities, which are all adjusted to the equivalent of a one-year maturity. We estimate parameters $(\rho, \kappa, \theta, \sigma, \eta, v, \xi, y, z)$ for the SDUDV model, $(\rho, \kappa, \theta, \sigma,  v, \xi, y)$ for the SVSD model, $(\rho, \kappa, \theta, \sigma, \eta, v, z)$ for the SVSU model and ($\rho$, $\sigma$, $\theta$, $\kappa$, $v$) for the Heston model. Table \ref{tab-in-error} shows the in-sample calibration errors for the four models in two time periods. The errors are calculated after we recover the values of option prices by multiplying the corresponding scaling factor. From Table \ref{tab-in-error}, we can find that the SVSDU model, the SVSU model and the SVSD model outperform the Heston model in-sample in all the periods, which implies that the stickiness factors can enhance the model's fit.  Besides, the SVSD model performs better than the SVSU model in-sample in 2021. In 2022, although the index experienced frequent fluctuations in both upward and downward directions, the frequency of index declines was relatively higher. So the SVSU model slightly outperforms the SVSD model in-sample in 2022, but the gap is minor. The findings show that the drawdown stickiness coefficient can improve the model's fit in the bull market, while the drawup stickiness coefficient makes model fit the data in the bear market better. Moreover, the fact that the SVSDU model is the best performer in all of the market conditions shows that the combination of drawdown and drawup stickiness coefficient can improve the model's in-sample fit further.

\begin{table}[!h]
\begin{center}
\begin{tabular}{cccc}
\hline
Sampling period       & Model  & APE       & AAE    \\ \hline
04/01/2021-31/12/2021 & Heston & 0.7410\% & 3.1164 \\
                      & SVSD   & 0.6610\% & 2.7786 \\
                      & SVSU   & 0.6960\% & 2.9274 \\
                      & SVSDU  & 0.5380\% & 2.2616 \\
03/01/2022-30/12/2022 & Heston & 1.1140\%  & 3.2213 \\
                      & SVSD   & 0.8060\%  & 2.3295 \\
                      & SVSU   & 0.8050\%  & 2.3271 \\
                      & SVSDU  & 0.6970\%  & 1.9619 \\ \hline
\end{tabular}
\end{center}
\caption{In-sample calibration errors for four models in two periods}
\label{tab-in-error}
\end{table}

To analyse the in-sample calibrations in more details, we check the fits with respect to  Black–Scholes implied volatility for some expiries on different dates in 2021 and 2022. The in-sample calibrated implied volatility curves of the four models and the mid implied volatility curves derived from mid prices are shown in Figure \ref{fig-imp}. It is obvious that the SVSDU model fits the implied volatilities for SPX options pretty well. The SVSD model and the SVSU model fit the SPX option smiles less well than the SVSDU model, but perform better than the Heston model.

\begin{figure}[!h]
		\small
		\centering
		\subfigure[$\tau=\frac{32}{365}$]{
			\begin{minipage}[t]{0.32\linewidth}
				\centering
				\includegraphics[width=2in]{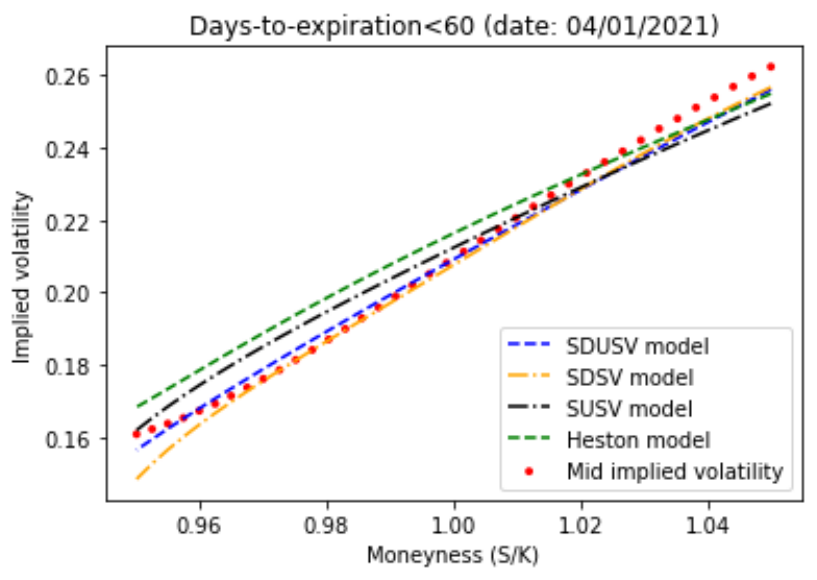}
			\end{minipage}%
		}%
		\subfigure[$\tau=\frac{149}{365}$]{
			\begin{minipage}[t]{0.32\linewidth}
				\centering
				\includegraphics[width=2in]{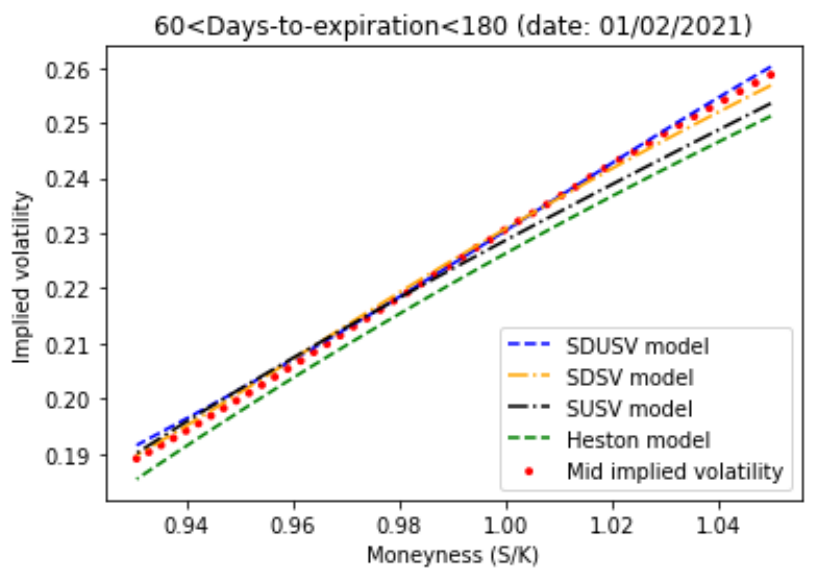}
			\end{minipage}%
		}%
  \subfigure[$\tau=\frac{325}{365}$]{
			\begin{minipage}[t]{0.32\linewidth}
				\centering
				\includegraphics[width=2in]{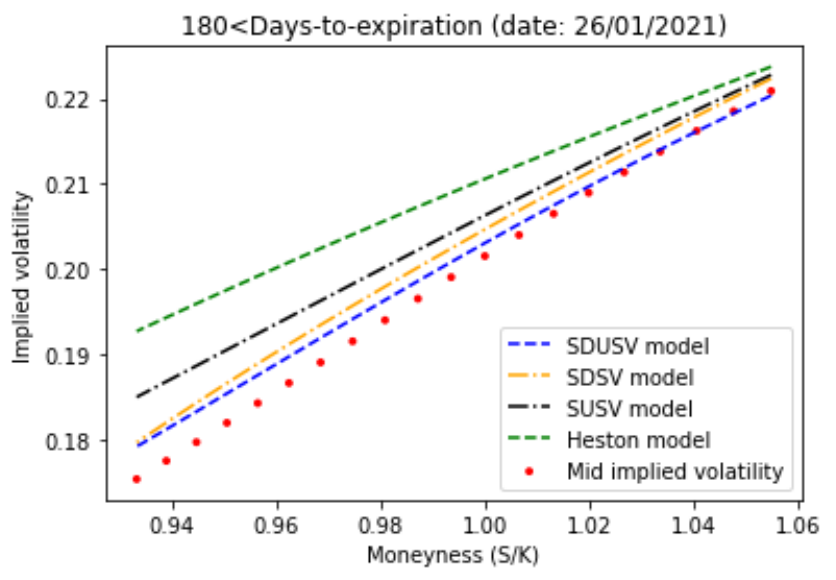}
			\end{minipage}%
		}%

  \subfigure[$\tau=\frac{42}{365}$]{
			\begin{minipage}[t]{0.32\linewidth}
				\centering
				\includegraphics[width=2in]{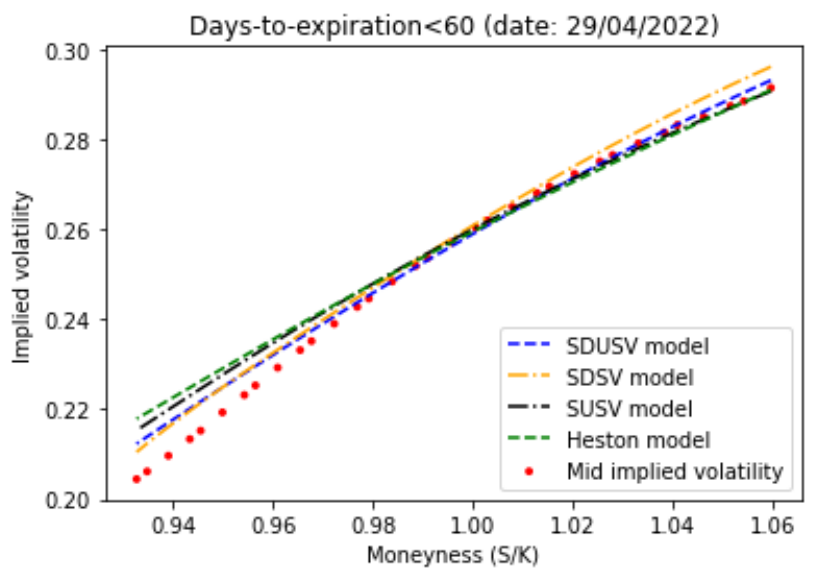}
			\end{minipage}%
		}%
\subfigure[$\tau=\frac{76}{365}$]{
			\begin{minipage}[t]{0.32\linewidth}
				\centering
				\includegraphics[width=2in]{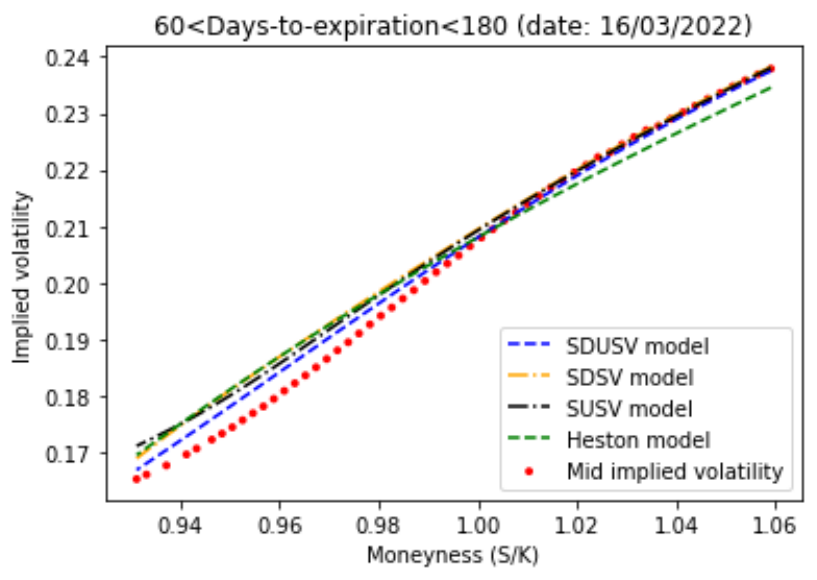}
			\end{minipage}%
		}%
  \subfigure[$\tau=\frac{186}{365}$]{
			\begin{minipage}[t]{0.32\linewidth}
				\centering
				\includegraphics[width=2in]{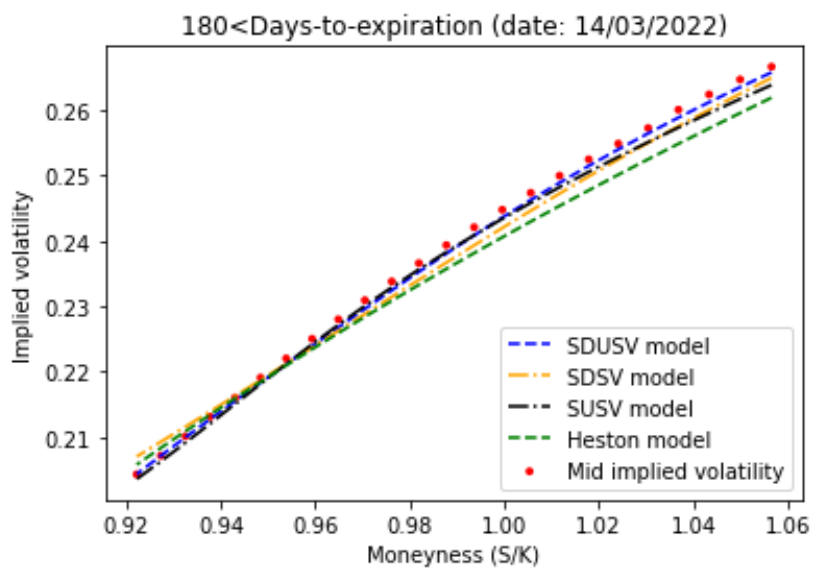}
			\end{minipage}%
		}%
  \centering
  \caption{Implied volatility curves of the four models and mid curves ($\tau$ is the normalized time to maturity measured in years)}
		\label{fig-imp}
	\end{figure}

We also examine the out-of-sample pricing performances of the four models. On each day, except the first day of calibration, we use previous day's option prices to calibrate model parameters and use them to calculate current day's option prices. Market variables such
as the index value $x$ and risk-free rate $r$ are updated as observed on the current day. The process is repeated until the last day of the calibration period. We need to point out that as for the SVSDU model, SVSD model and the SVSU model, we predict current day's option prices with previous day's drawdown ratio $\frac{x}{y}$ and drawup ratio $\frac{x}{z}$ instead of previous day's parameters $y$ and $z$. To be more specific, taking the SVSDU model as an example, on day $d$, after we back out the parameters $y_{d-1}$ and $z_{d-1}$ with previous day's option prices, we calculate the ratios $\frac{x_{d-1}}{y_{d-1}}$ and $\frac{x_{d-1}}{z_{d-1}}$, where $x_{d-1}$ is the index value on day $d-1$. Then we obtain $y_{new}=\frac{x_d y_{d-1}}{x_{d-1}}$ and $z_{new}=\frac{x_d z_{d-1}}{x_{d-1}}$, and use them to predict current day's option prices instead of using $y_{d-1}$ and $z_{d-1}$. If we directly use $y_{d-1}$ and $x_{d-1}$ to make a prediction, the ratios $\frac{x_{d}}{y_{d-1}}$ and $\frac{x_{d}}{z_{d-1}}$ are not consistent with stickiness coefficients $\xi$ and $\eta$ obtained from the previous day, since previous day's stickiness coefficients $\xi$ and $\eta$ measure the previous day's stickiness of drawdown ratio and drawup ratio at the level 1. The above procedure is the same for the SVSD model and the SVSU model when we make prediction. 

The average out-of-sample calibration errors are shown in Table \ref{tab-out-error}. We can find that the SVSDU model, SVSD model and the SVSU model beat the classical Heston model in all of the market situations out-of-sample. So the stickiness parameters improve models' structural fitting. It is surprising that although the SVSDU model on average calibrates the best in-sample in 2021, the SVSD model achieves better out-of-sample performances than the SVSDU model in 2021. This is likely because the coexistence of the drawdown and drawup stickiness factors interferes with the model's predictions in an upward trend. In growing economics, losing streaks are less likely to appear in the near future, so the drawup stickiness factor may counteract some of the positive effects of the drawdown stickiness factor on the model's prediction performances. For the similar reason, the SVSU model is surpassed by the SVSDU model out-of-sample in 2021, because the drawdown stickiness factor offsets some of the negative influences of the drawup stickiness factor in prediction in an upward trend. The SVSDU model achieves the best out-of-sample performance in 2022, when the market experienced frequent fluctuations with both rises and falls occurring frequently. This finding implies that the SVSDU model has a good predictive ability in volatile market conditions. 

\begin{table}[!h]
\begin{center}
\begin{tabular}{cccc}
\hline
Sampling period       & Model  & APE      & AAE    \\ \hline
04/01/2021-31/12/2021 & Heston & 1.3280\% & 5.5841 \\
                      & SVSD   & 1.2310\% & 5.1792 \\
                      & SVSU   & 1.3060\% & 5.4929 \\
                      & SVSDU  & 1.2740\% & 5.3596 \\
03/01/2022-30/12/2022 & Heston & 1.9870\% & 5.7397 \\
                      & SVSD   & 1.8670\% & 5.3917 \\
                      & SVSU   & 1.8100\% & 5.2261 \\
                      & SVSDU  & 1.7870\% & 5.1618 \\ \hline
\end{tabular}
\end{center}
\caption{Out-of-sample calibration errors for four models in three periods}
\label{tab-out-error}
\end{table}

In order to further explore the out-of-sample performances of the four models, we follow \cite{bakshi1997empirical} and carry out out-of-sample testing under different moneyness-maturity groups. The testing is performed in 2021, when the record highs of index value occurred the most frequently and the SVSD model achieves the best overall prediction performances. The pricing results are shown in Table \ref{tab-cate-aae} and Table \ref{tab-cate-ape}. Under the "All-Option-Based" group, the results are obtained using the parameters implied by all of the previous day's call option prices. Under the "Maturity-Based" group, the results are obtained using the parameters implied by previous day's option prices of a given time to maturity to price the current day's options of the same time to maturity. Under the ``Moneyness-Based" group, the results are obtained using the parameters implied by previous day's option prices of a given moneyness to price the current day's options of the same moneyness.  A call option is said to be at-the-money (ATM) if its moneyness $S/K\in (0.97,1.03)$; out-of-the-money (OTM) if $S/K\le 0.97$; and in-the-money (ITM) if $S/K\ge 1.03$. In terms of expiration, an option can be classified as short-term ($<60$ days); medium-term ($60-180$ days); and long-term ($>180$ days). As for the "All-Option-Based" group, both pricing error measures rank the SVSD model first in most of the categories, except that for a few categories either the call options have short maturities or are OTM. One possible explanation is that when we solve the optimization problem \eqref{eq:modified-optimization-problem} using all of the call option data from the previous day, the objective function in \eqref{eq:modified-optimization-problem} and our calibration algorithm are biased in favor of more expensive calls (i.e., medium-term and long-term calls, ATM and ITM calls), especially for the SVSD model. So we can find that the SVSD model achieves the best results in the medium-term and long-term categories except for the deep OTM case under the "All-Option-Based" group. Another possible reason for poor performances of the SVSD model in OTM categories is that the SVSD model is good at modeling the dynamic of asset value in an upward trend, when most of the options are ITM.

As for the "Maturity-Based" results, both pricing error measures rank the Heston model last and the SVSD model the first in most of the categories. The SVSD model performs better under the "Maturity-Based" treatment for ITM and ATM calls with short maturities than under "All-Option-Based" treatment, with both pricing error measures ranking the SVSD model first in those categories. In addition, the SVSD model still achieves the best prediction results in medium-term category except for the deep OTM case. The reason is that the trend of the index value tends to persist in the short and medium term. If the index value is in an upward trend, then in the short or medium term, the index value is highly likely to continue to rise. So the SVSD model can offer more accurate prediction of option prices. The SVSDU model improves the most in pricing long-term options. This is likely due to the fact that the trend of the index value may rise or fall in the long run, and the SVSDU model with both drawdown stickiness parameter and drawup stickiness parameter is able to better reflect the effect of long-term uncertainty on option values.  

As for the results from the "Moneyness-Based" group, both pricing error measures rank the Heston model last and the SVSD model the first except for a few categories. The SVSD model benefits the most from the "Moneyness-Based" treatment during 2021. Regardless of maturity, both pricing error measures rank the SVSD model the first except for the deep OTM case, while the SVSDU model achieves the best performance in predicting deep OTM call option prices under "Moneyness-Based" treatment.

\begin{table}[!h]
\resizebox{\linewidth}{!}{
\begin{tabular}{ccccclccclccc}
\hline
\multicolumn{1}{l}{}                                      & \multicolumn{1}{l}{} & \multicolumn{3}{c}{\begin{tabular}[c]{@{}c@{}}All-Option-Based\\ Days-to-Expiration\end{tabular}} &  & \multicolumn{3}{c}{\begin{tabular}[c]{@{}c@{}}Maturity-Based\\ Days-to-Expiration\end{tabular}} &  & \multicolumn{3}{c}{\begin{tabular}[c]{@{}c@{}}Moneyness-Bsed\\ Days-to-Expiration\end{tabular}} \\ \cline{3-5} \cline{7-9} \cline{11-13} 
\begin{tabular}[c]{@{}c@{}}Moneyness\\ ($S/K$)\end{tabular} & Model                & \multicolumn{1}{l}{\textless{}60}  & \multicolumn{1}{l}{60-180}  & \multicolumn{1}{l}{$\ge$ 180}  &  & \multicolumn{1}{l}{\textless{}60}  & \multicolumn{1}{l}{60-180} & \multicolumn{1}{l}{$\ge$ 180} &  & \multicolumn{1}{l}{\textless{}60}  & \multicolumn{1}{l}{60-180} & \multicolumn{1}{l}{$\ge$ 180} \\ \hline
\textless{}0.94                                           & Heston               & 3.53                               & 4.11                        & 6.33                           &  & 3.44                               & 3.27                       & 4.56                          &  & 3.70                               & 2.89                       & 3.75                          \\
& SVSD                 & 4.74                               & 5.18                        & 7.11                           &  & 2.98                               & 3.83                       & 6.38                          &  & 1.39                               & 2.27                       & 3.27                          \\
& SVSU                 & 2.97                               & 5.01                        & 9.78                           &  & 3.03                               & 3.28                       & 6.64                          &  & 1.50                               & 2.34                       & 3.34                          \\
& SVSDU                & 3.14                               & 4.78                        & 6.35                           &  & 2.47                               & 3.09                       & 3.65                          &  & 1.25                               & 2.26                       & 3.13                          \\
0.94-0.97                                                 & Heston               & 3.95                               & 5.61                        & 11.22                           &  & 3.36                               & 5.06                       & 7.45                          &  & 3.33                               & 4.79                       & 6.32                          \\
& SVSD                 & 3.53                               & 5.24                        & 7.91                           &  & 3.39                               & 4.80                       & 7.31                          &  & 2.00                               & 4.14                       & 4.98                          \\
& SVSU                 & 3.31                               & 5.26                        & 8.07                           &  & 3.37                               & 5.11                       & 6.63                          &  & 2.33                               & 4.75                       & 5.78                          \\
& SVSDU                & 3.82                               & 5.37                        & 8.43                          &  & 2.99                               & 5.07                       & 6.17                          &  & 2.37                               & 4.93                       & 6.32                          \\
0.97-1.00                                                 & Heston               & 5.83                               & 6.47                        & 9.31                           &  & 5.11                               & 6.84                       & 7.44                          &  & 5.21                               & 6.72                       & 7.30                          \\
& SVSD                 & 5.84                               & 6.20                        & 7.34                           &  & 5.10                               & 6.32                       & 7.34                          &  & 4.51                               & 5.83                       & 5.77                          \\
& SVSU                 & 6.21                               & 7.12                        & 7.53                           &  & 5.38                               & 6.59                       & 6.71                          &  & 4.97                               & 6.70                       & 6.52                          \\
& SVSDU                & 6.22                               & 6.65                        & 8.23                           &  & 5.13                               & 6.62                      & 6.82                          &  & 4.84                               & 6.53                       & 6.79                          \\
1.00-1.03                                                 & Heston               & 6.50                               & 6.80                        & 7.33                           &  & 6.06                               & 7.01                       & 6.84                          &  & 6.22                               & 7.10                       & 7.43                          \\
& SVSD                 & 6.09                               & 6.61                        & 6.77                           &  & 5.85                               & 6.62                       & 6.96                          &  & 5.57                               & 6.35                       & 5.92                          \\
& SVSU                 & 6.81                               & 7.73                        & 7.16                           &  & 6.12                               & 7.10                      & 6.94                          &  & 6.25                               & 7.23                       & 6.85                          \\
& SVSDU                & 6.37                               & 7.18                        & 7.78                           &  & 5.99                               & 6.96                       & 7.00                          &  & 5.88                               & 6.88                       & 6.64                          \\
1.03-1.06                                                 & Heston               & 5.28                               & 6.93                        & 7.17                           &  & 5.20                               & 6.82                       & 7.00                          &  & 5.37                               & 7.08                       & 7.67                          \\
& SVSD                 & 5.89                               & 6.52                        & 6.43                           &  &4.95                               & 6.43                       & 6.51                          &  & 4.78                               & 6.22                       & 5.85                          \\
& SVSU                 & 6.01                               & 7.19                        & 6.97                           &  &5.15                               & 6.90                       & 6.92                          &  & 5.11                               & 6.73                       & 6.77                          \\
& SVSDU                & 5.29                               & 6.94                        & 7.45                           &  & 4.95                               & 6.67                       & 6.91                          &  & 4.82                               & 6.58                       & 6.48                          \\
$\ge$1.06                                                 & Heston               & 4.18                               & 5.65                        & 8.00                           &  & 4.10                               & 5.53                       & 7.72                          &  & 4.07                               & 5.22                       & 6.68                          \\
& SVSD                 & 4.27                               & 4.58                        & 5.14                           &  & 3.87                               & 4.59                       & 5.54                          &  & 3.97                               & 4.31                       & 4.14                          \\
& SVSU                 & 4.15                               & 5.01                        & 5.42                           &  & 3.88                               & 4.73                       & 5.35                          &  & 4.20                               & 4.86                      & 5.59                          \\
& SVSDU                & 4.63                               &4.73                        & 5.82                           &  & 4.53                               & 4.74                       & 5.08                          &  & 4.33                               & 4.84                       & 5.24                          \\ \hline
\end{tabular}}
\caption{Out-of-sample average absolute errors (AAE) between the market prices and the model prices for each call in a given moneyness–maturity category in 2021}
\label{tab-cate-aae}
\end{table}

\begin{table}[!h]
\resizebox{\linewidth}{!}{
\begin{tabular}{ccccclccclccc}
\hline
\multicolumn{1}{l}{}                                      & \multicolumn{1}{l}{} & \multicolumn{3}{c}{\begin{tabular}[c]{@{}c@{}}All-Option-Based\\ Days-to-Expiration\end{tabular}} &  & \multicolumn{3}{c}{\begin{tabular}[c]{@{}c@{}}Maturity-Based\\ Days-to-Expiration\end{tabular}} &  & \multicolumn{3}{c}{\begin{tabular}[c]{@{}c@{}}Moneyness-Bsed\\ Days-to-Expiration\end{tabular}} \\ \cline{3-5} \cline{7-9} \cline{11-13} 
\begin{tabular}[c]{@{}c@{}}Moneyness\\ ($S/K$)\end{tabular} & Model                & \multicolumn{1}{l}{\textless{}60}  & \multicolumn{1}{l}{60-180}  & \multicolumn{1}{l}{$\ge$ 180}  &  & \multicolumn{1}{l}{\textless{}60}  & \multicolumn{1}{l}{60-180} & \multicolumn{1}{l}{$\ge$ 180} &  & \multicolumn{1}{l}{\textless{}60}  & \multicolumn{1}{l}{60-180} & \multicolumn{1}{l}{$\ge$ 180} \\ \hline
\textless{}0.94                                           & Heston               & 94.67\%                           & 21.67\%                     & 15.30\%                        &  & 92.10\%                           & 17.23\%                    & 11.01\%                       &  & 98.98\%                           & 15.20\%                    & 9.06\%                        \\
& SVSD                 & 127.11\%                            & 27.30\%                     & 17.19\%                        &  & 79.91\%                            & 20.17\%                    & 15.41\%                       &  & 37.18\%                            & 11.97\%                    & 7.90\%                        \\
& SVSU                 & 79.63\%                            & 26.39\%                     & 23.63\%                        &  & 81.20\%                            & 17.28\%                    & 16.04\%                       &  &40.23\%                            & 12.33\%                    & 8.08\%                        \\
& SVSDU                & 84.25\%                            & 25.16\%                     & 15.35\%                        &  &66.33\%                            & 16.28\%                    & 8.83\%                        &  & 33.47\%                            & 11.90\%                    & 7.58\%                        \\
0.94-0.97                                                 & Heston               & 41.05\%                            & 9.46\%                      & 7.50\%                         &  & 34.89\%                            & 8.53\%                    & 4.98\%                        &  & 34.57\%                            & 8.07\%                     & 4.23\%                        \\
& SVSD                 & 36.74\%                            & 8.84\%                      & 5.29\%                         &  & 35.24\%                            & 8.09\%                     & 4.88\%                        &  & 20.81\%                            & 6.98\%                     & 3.33\%                        \\
& SVSU                 & 34.36\%                            & 8.87\%                      & 5.39\%                         &  &35.07\%                            & 8.61\%                     & 4.43\%                        &  & 24.18\%                            & 8.00\%                     & 3.86\%                        \\
& SVSDU                & 39.78\%                            & 9.05\%                      &5.63\%                         &  &31.10\%                            & 8.54\%                     &4.12\%                        &  &24.63\%                            &8.31\%                     &4.23\%                        \\
0.97-1.00                                                 & Heston               &17.64\%                            &5.45\%                      &4.21\%                         &  &15.46\%                             & 5.77\%                     &3.37\%                        &  &15.74\%                             &5.66\%                     &3.30\%                        \\
& SVSD                 & 17.68\%                            & 5.22\%                      &3.32\%                         &  &15.44\%                             & 5.33\%                     &3.36\%                        &  &13.65\%                             &4.92\%                     &2.61\%                        \\
& SVSU                 &18.78\%                            &6.00\%                      &3.40\%                         &  &16.29\%                             & 5.55\%                     &3.04\%                        &  &15.03\%                             &5.64\%                     &2.95\%                        \\
& SVSDU                &18.83\%                             &5.61\%                      &3.72\%                         &  &15.53\%                             & 5.58\%                     &3.08\%                        &  &14.63\%                             &5.50\%                     &3.07\%                        \\
1.00-1.03                                                 & Heston               &6.10\%                             &3.43\%                      &2.43\%                         &  &5.69\%                             & 3.54\%                     &2.27\%                        &  &5.84\%                             &3.58\%                     &2.47\%                        \\
& SVSD                 & 5.72\%                             & 3.34\%                      & 2.25\%                         &  &5.49\%                             & 3.34\%                     & 2.31\%                        &  & 5.23\%                             & 3.20\%                     & 1.97\%                        \\
& SVSU                 & 6.39\%                             & 3.90\%                      & 2.38\%                         &  & 5.74\%                             & 3.58\%                     & 2.30\%                        &  & 5.87\%                             & 3.65\%                     &2.28\%                        \\
& SVSDU                & 5.98\%                             & 3.62\%                      & 2.58\%                         &  &5.62\%                             & 3.51\%                     & 2.32\%                        &  & 5.52\%                             &3.47\%                     &2.21\%                        \\
1.03-1.06                                                 & Heston               &2.54\%                             &2.41\%                      &1.87\%                         &  &2.51\%                             & 2.37\%                     &1.82\%                        &  &2.59\%                             &2.47\%                     &2.00\%                        \\
& SVSD                 & 2.83\%                             & 2.27\%                      &1.67\%                         &  &2.38\%                             & 2.24\%                     &1.69\%                        &  &2.30\%                             &2.17\%                     &1.52\%                        \\
& SVSU                 &2.89\%                             &2.50\%                      &1.81\%                         &  &2.48\%                             & 2.40\%                     & 1.80\%                        &  &2.45\%                             &2.35\%                     &1.76\%                        \\
& SVSDU                &2.55\%                             &2.41\%                      &1.94\%                         &  &2.38\%                             & 2.32\%                     &1.79\%                        &  &2.32\%                             &2.29\%                     &1.69\%                        \\
$\ge$1.06                                                 & Heston               &0.64\%                             & 0.76\%                      & 0.84\%                         &  & 0.63\%                             & 0.75\%                     & 0.81\%                        &  & 0.63\%                             &0.71\%                     &0.70\%                        \\
& SVSD                 &0.66\%                             &0.61\%                      &0.54\%                         &  &0.59\%                             & 0.62\%                     &0.58\%                        &  &0.61\%                             &0.58\%                     &0.43\%                        \\
& SVSU                 &0.63\%                             & 0.67\%                      & 0.56\%                         &  &0.60\%                             & 0.63\%                     & 0.56\%                        &  & 0.65\%                             & 0.66\%                     &0.58\%                        \\
& SVSDU                & 0.71\%                             &0.63\%                      & 0.60\%                         &  &0.69\%                             & 0.64\%                     & 0.53\%                        &  & 0.66\%                             & 0.65\%                     & 0.55\%                        \\ \hline
\end{tabular}}
\caption{Out-of-sample average percentage errors (APE) between market prices and the model prices for each call in a given moneyness-maturity category in 2021}
\label{tab-cate-ape}
\end{table}

\section{Conclusion}
\label{sec-con}
We have developed a novel financial model, the SVSDU model that admits drawdown stickiness factor and drawup stickiness factor, which can explain the notable features of winning and losing streak that can not be captured simultaneously by existing quantitative models in the financial market. Besides, the incorporation of stochastic volatility in our model helps account for other stylized facts of market data. Due to the lack of closed-form option pricing formula under the SVSDU model, we use a deep neural network to approximate the option pricing formula by solving the corresponding pricing PDE. The numerical experiments demonstrate the accuracy and efficiency of our deep learning method. In order to calibrate the SVSDU model to historical data, we develop a novel calibration framework for the neural network approximation. We compare calibration results of the SVSDU model with other three financial models: the SVSD model, the SVSU model and the Heston model. We apply the four models to SPX European call option prices in 2021 when the winning streak appeared most of time, and in 2022 when both the winning and losing streak appeared frequently. The empirical studies show that the SVSDU model performs the best in-sample in all of the periods and out-of-sample in the volatile market, while the SVSD model (SVSU model) performs well out-of-sample in winning streak period (losing streak period). Those facts imply that the SVSDU model can reflect the effects of frequent persistent extremes (maxima and minima) on option values pretty well and it is a good reflection of economic reality, since both rise and fall in asset prices are common in the financial market. In addition, in a concentrated period of time, the SVSD model is good at predicting the option values when the underlying asset value dynamics experience a continuous growth, while the SVSU model makes a good prediction on the option prices when the underlying asset value dynamics undergo a consistent decline.

In the future research, we are interested in the performances of calibrated drawdown and drawup stickiness factors in portfolio optimization, since the stickiness factors evaluate the effects of extreme persistence of the underlying asset values.

\appendices
\section{Proof}
\label{app-proof}

\begin{proof}[Proof of the Theorem \ref{thm:existence-weak-sol-SDE}]

    Let $\mathbb{S}=\left\{\boldsymbol{x} \in \mathbb{R}^3: \Phi(\boldsymbol{x})>0\right\} $ and $ \partial \mathbb{S}=\left\{\boldsymbol{x} \in \mathbb{R}^3: \Phi(\boldsymbol{x})=0\right\}$, where $\Phi(\boldsymbol{x})=\prod_{i=1}^{\hat{d}} (1-e^{-x_i})$ and $\hat{d}=2$ is the number of dimensions exhibiting stickiness. We set $\widetilde{U}^{\pm}_t=\ln U_t^{\pm}$ and  $\widetilde{D}^{\pm}_t=-\ln D_t^{\pm}$. The SDE \eqref{eq:sticky-process-SDE} can be rewritten as
    \begin{align}\label{eq:ln-sticky-process-SDE}
	\begin{cases}
	&d\widetilde{D}^{\pm}_t = \mathbf{1}_{\{ \widetilde{D}^{\pm}_t>0, \widetilde{U}^{\pm}_t>0 \}}\big((r-\mu+\frac{1}{2}V_t) dt - \sqrt{V_t} dB_t \big)  + dL_t^0(\widetilde{D}^\pm),\\
	&d\widetilde{U}^\pm_t = \mathbf{1}_{\{ \widetilde{D}^{\pm}_t>0, \widetilde{U}^{\pm}_t>0 \}}\big((\mu-r-\frac{1}{2}V_t)dt + \sqrt{V_t} dB_t \big)  + d L_t^0(\widetilde{U}^\pm),\\
	&dV_t=\kappa(\theta-V_t)dt+\sigma\sqrt{V_t}dW_t,\\
	&\int_{0}^{t} 1_{\{\widetilde{D}^{\pm}_s = 0\}} ds  =  \xi L_t^0(\widetilde{D}^{\pm}),\\
	&\int_{0}^{t} 1_{\{\widetilde{U}^\pm_s = 0\}} ds  =  \eta L_t^0(\widetilde{U}^\pm).
	\end{cases}	
	\end{align}

Since the Feller condition $2\kappa\theta>\sigma^2$ is satisfied, $V_t>0$. We set $P_t^{\pm}=\frac{\widetilde{U}^\pm_t}{V_t+1}$ and $Q_t^\pm=\frac{\widetilde{D}^{\pm}_t}{V_t+1}$. So
\begin{align}
    dP_t^{\pm} &= \widetilde{U}^\pm_t d\Big(\frac{1}{V_t+1}\Big)+\frac{1}{V_t+1}d\widetilde{U}^\pm_t+d\widetilde{U}^\pm_t d\Big(\frac{1}{V_t+1}\Big)\\
  &=\Big(-\frac{P^{\pm}_t}{V_t+1} k\big(\theta-V_t\big)+\frac{P^{\pm}_t}{\left(V_t+1\right)^2} \sigma^2 V_t\Big) d t-\frac{P^{\pm}_t \sigma \sqrt{V_t}}{V_t+1} d W_t \\ &\quad\ +\mathbf{1}_{\{P^{\pm}_t>0, Q^{\pm}_t>0\}}\Big(\big(\frac{\mu-r-\frac{1}{2} V_t}{V_t+1}-\frac{1}{\left(V_t+1\right)^2} \sigma \rho V_t \big)d t+\frac{\sqrt{V_t}}{V_t+1} d B_t\Big) \\ &\quad\ +\frac{1}{\left(V_t+1\right) \eta} \mathbf{1}_{\{P^{\pm}_t=0\}} d t.
\end{align}
Let $\widetilde{P}_t^\pm=\ln (P^\pm_t+1)$, $\widetilde{Q}_t^\pm=\ln (Q^\pm_t+1)$, and $\widetilde{V}_t=\ln (V_t+1)$. Define $f(x)=e^x-1$. Then we have
\begin{align}
    V_t=f(\widetilde{V}_t)>0, \ P^\pm_t=f(\widetilde{P}^{\pm}_t)\ge 0, \ Q^\pm_t=f(\widetilde{Q}^{\pm}_t)\ge 0.
\end{align}

$\widetilde{V}_t$ satisfies
\begin{align}
    d\widetilde{V}_t&=\frac{1}{V_t+1}dV_t-\frac{1}{(V_t+1)^2}d\left\langle V,V  \right\rangle_t\\
    &=\Big(\frac{\kappa(\theta-V_t)}{V_t+1}-\frac{\sigma^2V_t}{(V_t+1)^2}  \Big)dt+\frac{\sigma\sqrt{V_t}}{V_t+1}dW_t\\
    &=\Big(\frac{\kappa(\theta-f(\widetilde{V_t}))}{f(\widetilde{V_t})+1}-\frac{\sigma^2f(\widetilde{V_t})}{(f(\widetilde{V_t})+1)^2}  \Big)dt+\frac{\sigma\sqrt{f(\widetilde{V_t})}}{f(\widetilde{V_t})+1}dW_t.
\end{align}

$\widetilde{P}_t^\pm$ satisfies 
\begin{align}
    d\widetilde{P}^{\pm}_t&=\frac{1}{P^{\pm}_t+1}dP^{\pm}_t-\frac{1}{(P^{\pm}_t+1)^2}d\left\langle P^{\pm},P^{\pm}\right\rangle_t\\
    &=\bigg(\Big(-\frac{\kappa(\theta-f(\widetilde{V_t}))f(\widetilde{P}^{\pm}_t)}{(f(\widetilde{V_t})+1)(f(\widetilde{P}^{\pm}_t)+1)}+\frac{\sigma^2f(\widetilde{V}_t)f(\widetilde{P}^{\pm}_t)}{(f(\widetilde{V}_t)+1)^2(f(\widetilde{P}^{\pm}_t)+1)}+\frac{\mu-r-\frac{1}{2}f(\widetilde{V}_t)}{(f(\widetilde{P}^{\pm}_t)+1)(f(\widetilde{V}_t)+1)} \\
    &\quad \ -\frac{\sigma\rho f(\widetilde{V}_t)}{(f(\widetilde{P}^{\pm}_t)+1)(f(\widetilde{V}_t)+1)^2}-\frac{\sigma^2f(\widetilde{P}^{\pm}_t)^2f(\widetilde{V}_t)}{(f(\widetilde{P}^{\pm}_t)+1)^2(f(\widetilde{V}_t)+1)^2}-\frac{f(\widetilde{V}_t)}{(f(\widetilde{P}^{\pm}_t)+1)^2(f(\widetilde{V}_t)+1)^2}\\
    &\quad \ +\frac{2\rho\sigma f(\widetilde{P}^{\pm}_t)f(\widetilde{V}_t)}{(f(\widetilde{P}^{\pm}_t)+1)^2(f(\widetilde{V}_t)+1)^2}\Big)dt -\frac{\sigma f(\widetilde{P}^{\pm}_t)\sqrt{f(\widetilde{V}_t)}}{(f(\widetilde{P}^{\pm}_t)+1)(f(\widetilde{V}_t)+1)}dW_t\\
    &\quad \ +\frac{\sqrt{f(\widetilde{V}_t)}}{(f(\widetilde{P}^{\pm}_t)+1)(f(\widetilde{V}_t)+1)}dB_t \bigg)\mathbf{1}_{\{\widetilde{P}^{\pm}_t>0, \widetilde{Q}^{\pm}_t>0\}}
    +\frac{1}{(f(\widetilde{V}_t)+1)\eta}\mathbf{1}_{\{\widetilde{P}^{\pm}_t=0\}}dt \\
    &= \mathbf{1}_{\{\widetilde{P}^{\pm}_t>0, \widetilde{Q}^{\pm}_t>0\}}\bigg(\Big(\frac{-\kappa(\theta-f(\widetilde{V_t}))f(\widetilde{P}^{\pm}_t)+\mu-r-\frac{1}{2}f(\widetilde{V}_t)}{(f(\widetilde{V_t})+1)(f(\widetilde{P}^{\pm}_t)+1)}+\frac{\sigma^2f(\widetilde{V}_t)f(\widetilde{P}^{\pm}_t)-\sigma\rho f(\widetilde{V}_t)}{(f(\widetilde{V}_t)+1)^2(f(\widetilde{P}^{\pm}_t)+1)}\\
    &\quad \ +\frac{-\sigma^2f(\widetilde{P}^{\pm}_t)^2f(\widetilde{V}_t)-f(\widetilde{V}_t)+2\rho\sigma f(\widetilde{P}^{\pm}_t)f(\widetilde{V}_t)}{(f(\widetilde{P}^{\pm}_t)+1)^2(f(\widetilde{V}_t)+1)^2}\Big)dt-\frac{\sigma f(\widetilde{P}^{\pm}_t)\sqrt{f(\widetilde{V}_t)}}{(f(\widetilde{P}^{\pm}_t)+1)(f(\widetilde{V}_t)+1)}dW_t\\
    &\quad \ +\frac{\sqrt{f(\widetilde{V}_t)}}{(f(\widetilde{P}^{\pm}_t)+1)(f(\widetilde{V}_t)+1)}dB_t\bigg)+\mathbf{1}_{\{\widetilde{P}^{\pm}_t=0\}} \frac{1}{(f(\widetilde{V}_t)+1)\eta}dt.
\end{align}
Similarly, we have
\begin{align}
    d\widetilde{Q}_t^{\pm}&=\mathbf{1}_{\{\widetilde{P}^{\pm}_t>0, \widetilde{Q}^{\pm}_t>0\}}\bigg(\Big(\frac{r-\mu-\frac{1}{2}f(\widetilde{V}_t)-f(\widetilde{Q}^{\pm}_t)\kappa(\theta-f(\widetilde{V}_t))}{(f(\widetilde{Q}_t^\pm)+1)(f(\widetilde{V}_t)+1)}+\frac{\sigma\rho f(\widetilde{V}_t)+\sigma^2 f(\widetilde{Q}_t^\pm)f(\widetilde{V}_t)}{(f(\widetilde{Q}_t^\pm)+1)(f(\widetilde{V}_t)+1)^2}\\
    &\quad\ - \frac{\sigma^2f(\widetilde{Q}_t^\pm)^2f(\widetilde{V}_t)+f(\widetilde{V}_t)+2\sigma\rho f(\widetilde{Q}_t^\pm)f(\widetilde{V}_t)}{(f(\widetilde{Q}_t^\pm)+1)^2(f(\widetilde{V}_t)+1)^2}\Big)dt-\frac{\sigma f(\widetilde{Q}_t^\pm)\sqrt{f(\widetilde{V}_t)}}{(f(\widetilde{Q}_t^\pm)+1)(f(\widetilde{V}_t)+1)}dW_t\\
    &\quad\ -\frac{\sqrt{f(\widetilde{V}_t)}}{(f(\widetilde{Q}_t^\pm)+1)(f(\widetilde{V}_t)+1)}dB_t\bigg)+\mathbf{1}_{\{\widetilde{Q}_t^\pm=0\}}\frac{1}{(f(\widetilde{V}_t)+1)\xi}dt.
\end{align}

Note that $dW_t=\rho dB_t+\sqrt{1-\rho^2}d\bar{B}_t$ where $\{B_t,t\ge 0\}$ and $\{\bar{B}_t,t\ge 0\}$ are two independent standard Brownian motion. After the successive transformations, we obtain the SDE system of $\boldsymbol{X}_t=(\widetilde{P}_t^\pm, \widetilde{Q}_t^\pm, \widetilde{V}_t)^\top$ from the SDE \eqref{eq:sticky-process-SDE}:
\begin{align}\label{eq:final-sticky-process-SDE}
	\begin{cases}
	&d\widetilde{Q}_t^\pm=\mathbf{1}_{\{\boldsymbol{X}_t\in \mathbb{S}\}}\bigg(\Big(\frac{r-\mu-\frac{1}{2}f(\widetilde{V}_t)-f(\widetilde{Q}^{\pm}_t)\kappa(\theta-f(\widetilde{V}_t))}{(f(\widetilde{Q}_t^\pm)+1)(f(\widetilde{V}_t)+1)}+\frac{\sigma\rho f(\widetilde{V}_t)+\sigma^2 f(\widetilde{Q}_t^\pm)f(\widetilde{V}_t)}{(f(\widetilde{Q}_t^\pm)+1)(f(\widetilde{V}_t)+1)^2}\\
    &\quad\quad \quad \    - \frac{\sigma^2f(\widetilde{Q}_t^\pm)^2f(\widetilde{V}_t)+f(\widetilde{V}_t)+2\sigma\rho f(\widetilde{Q}_t^\pm)f(\widetilde{V}_t)}{(f(\widetilde{Q}_t^\pm)+1)^2(f(\widetilde{V}_t)+1)^2}\Big)dt-\frac{\sigma f(\widetilde{Q}_t^\pm)\sqrt{f(\widetilde{V}_t)}\sqrt{1-\rho^2}}{(f(\widetilde{Q}_t^\pm)+1)(f(\widetilde{V}_t)+1)}d\bar{B}_t\\
    &\quad\quad \quad \ -\Big(\frac{\sqrt{f(\widetilde{V}_t)}}{(f(\widetilde{Q}_t^\pm)+1)(f(\widetilde{V}_t)+1)}+\frac{\sigma f(\widetilde{Q}_t^\pm)\sqrt{f(\widetilde{V}_t)}\rho}{(f(\widetilde{Q}_t^\pm)+1)(f(\widetilde{V}_t)+1)}\Big)dB_t\bigg)\\
    &\quad\quad \quad \ + \mathbf{1}_{\{\boldsymbol{X}_t\in \partial \mathbb{S}\}}\mathbf{1}_{\{\widetilde{Q}_t^\pm=0\}}\frac{1}{(f(\widetilde{V}_t)+1)\xi}dt,\\
    &d\widetilde{P}_t^\pm= \mathbf{1}_{\{\boldsymbol{X}_t\in \mathbb{S}\}}\bigg(\Big(\frac{-\kappa(\theta-f(\widetilde{V_t}))f(\widetilde{P}^{\pm}_t)+\mu-r-\frac{1}{2}f(\widetilde{V}_t)}{(f(\widetilde{V_t})+1)(f(\widetilde{P}^{\pm}_t)+1)}+\frac{\sigma^2f(\widetilde{V}_t)f(\widetilde{P}^{\pm}_t)-\sigma\rho f(\widetilde{V}_t)}{(f(\widetilde{V}_t)+1)^2(f(\widetilde{P}^{\pm}_t)+1)}\\
    &\quad\quad \quad \  +\frac{-\sigma^2f(\widetilde{P}^{\pm}_t)^2f(\widetilde{V}_t)-f(\widetilde{V}_t)+2\rho\sigma f(\widetilde{P}^{\pm}_t)f(\widetilde{V}_t)}{(f(\widetilde{P}^{\pm}_t)+1)^2(f(\widetilde{V}_t)+1)^2}\Big)dt-\frac{\sigma f(\widetilde{P}^{\pm}_t)\sqrt{f(\widetilde{V}_t)}\sqrt{1-\rho^2}}{(f(\widetilde{P}^{\pm}_t)+1)(f(\widetilde{V}_t)+1)}d\bar{B}_t\\
    &\quad\quad \quad \  +\Big(\frac{\sqrt{f(\widetilde{V}_t)}}{(f(\widetilde{P}^{\pm}_t)+1)(f(\widetilde{V}_t)+1)}-\frac{\sigma f(\widetilde{P}^{\pm}_t)\sqrt{f(\widetilde{V}_t)}\rho}{(f(\widetilde{P}^{\pm}_t)+1)(f(\widetilde{V}_t)+1)}\Big)dB_t\bigg)\\
    &\quad\quad \quad \ +\mathbf{1}_{\{\boldsymbol{X}_t\in \partial \mathbb{S}\}}\mathbf{1}_{\{\widetilde{P}^{\pm}_t=0\}} \frac{1}{(f(\widetilde{V}_t)+1)\eta}dt,\\
	&d\widetilde{V}_t=\mathbf{1}_{\{\boldsymbol{X}_t\in \mathbb{S}\}}\bigg(\Big(\frac{\kappa(\theta-f(\widetilde{V_t}))}{f(\widetilde{V_t})+1}-\frac{\sigma^2f(\widetilde{V_t})}{(f(\widetilde{V_t})+1)^2}  \Big)dt+\frac{\sigma\rho\sqrt{f(\widetilde{V_t})}}{f(\widetilde{V_t})+1}dB_t+\frac{\sigma\sqrt{1-\rho^2}\sqrt{f(\widetilde{V_t})}}{f(\widetilde{V_t})+1}d\bar{B}_t\bigg)\\
    &\quad\quad \ \ +\mathbf{1}_{\{\boldsymbol{X}_t\in \partial \mathbb{S}\}}\bigg(\Big(\frac{\kappa(\theta-f(\widetilde{V_t}))}{f(\widetilde{V_t})+1}-\frac{\sigma^2f(\widetilde{V_t})}{(f(\widetilde{V_t})+1)^2}  \Big)dt+\frac{\sigma\rho\sqrt{f(\widetilde{V_t})}}{f(\widetilde{V_t})+1}dB_t+\frac{\sigma\sqrt{1-\rho^2}\sqrt{f(\widetilde{V_t})}}{f(\widetilde{V_t})+1}d\bar{B}_t\bigg),\\
    &\int_{0}^{t} \mathbf{1}_{\{\widetilde{Q}^\pm_s = 0\}} ds  =  \xi L_t^0(\widetilde{Q}^\pm)\\
        &\int_{0}^{t} \mathbf{1}_{\{\widetilde{P}^\pm_s = 0\}} ds  =  \eta L_t^0(\widetilde{P}^\pm).
	\end{cases}	
	\end{align}
The SDE \eqref{eq:final-sticky-process-SDE} can be formulated as
\begin{align}\label{eq:multidimensional-SDE-formulation}
	d \boldsymbol{X}_t= & \boldsymbol{\mu}\left(\boldsymbol{X}_t\right) \mathbf{1}\left(\boldsymbol{X}_t \in \mathbb{S}\right) d t+\boldsymbol{\Sigma}\left(\boldsymbol{X}_t\right) \mathbf{1}\left(\boldsymbol{X}_t \in \mathbb{S}\right) d \mathbf{B}_{1, t} \\ & +\hat{\boldsymbol{\beta}}\left(\boldsymbol{X}_t\right) \mathbf{1}\left(\boldsymbol{X}_t \in \partial \mathbb{S}\right) d t+\hat{\boldsymbol{\Gamma}}\left(\boldsymbol{X}_t\right) \mathbf{1}\left(\boldsymbol{X}_t \in \partial \mathbb{S}\right) d \boldsymbol{B}_{2, t} 
\end{align}
with the following sojourn condition
\begin{align}
    \mathbf{1}_{\{\boldsymbol{X}_t\in \partial \mathbb{S}\}}dt=\rho (\boldsymbol{X}_t)dL_t,
\end{align}
where $L$ is the local time process of $\boldsymbol{X}$ on the boundary. In \eqref{eq:multidimensional-SDE-formulation}, 
\begin{align}
\boldsymbol{\mu}\left(\boldsymbol{X}_t\right)=\left[\begin{array}{c}\frac{r-\mu-\frac{1}{2}f(\widetilde{V}_t)-f(\widetilde{Q}^{\pm}_t)\kappa(\theta-f(\widetilde{V}_t))}{(f(\widetilde{Q}_t^\pm)+1)(f(\widetilde{V}_t)+1)}+\frac{\sigma\rho f(\widetilde{V}_t)+\sigma^2 f(\widetilde{Q}_t^\pm)f(\widetilde{V}_t)}{(f(\widetilde{Q}_t^\pm)+1)(f(\widetilde{V}_t)+1)^2}- \frac{\sigma^2f(\widetilde{Q}_t^\pm)^2f(\widetilde{V}_t)+f(\widetilde{V}_t)+2\sigma\rho f(\widetilde{Q}_t^\pm)f(\widetilde{V}_t)}{(f(\widetilde{Q}_t^\pm)+1)^2(f(\widetilde{V}_t)+1)^2} \\ \frac{-\kappa(\theta-f(\widetilde{V_t}))f(\widetilde{P}^{\pm}_t)+\mu-r-\frac{1}{2}f(\widetilde{V}_t)}{(f(\widetilde{V_t})+1)(f(\widetilde{P}^{\pm}_t)+1)}+\frac{\sigma^2f(\widetilde{V}_t)f(\widetilde{P}^{\pm}_t)-\sigma\rho f(\widetilde{V}_t)}{(f(\widetilde{V}_t)+1)^2(f(\widetilde{P}^{\pm}_t)+1)} +\frac{-\sigma^2f(\widetilde{P}^{\pm}_t)^2f(\widetilde{V}_t)-f(\widetilde{V}_t)+2\rho\sigma f(\widetilde{P}^{\pm}_t)f(\widetilde{V}_t)}{(f(\widetilde{P}^{\pm}_t)+1)^2(f(\widetilde{V}_t)+1)^2} \\ \frac{\kappa(\theta-f(\widetilde{V_t}))}{f(\widetilde{V_t})+1}-\frac{\sigma^2f(\widetilde{V_t})}{(f(\widetilde{V_t})+1)^2}  \end{array}\right],
\end{align}
\begin{align}
\boldsymbol{\Sigma}\left(\boldsymbol{X}_t\right)=\left[\begin{array}{ccc}-\frac{\sigma f(\widetilde{Q}_t^\pm)\sqrt{f(\widetilde{V}_t)}\sqrt{1-\rho^2}}{(f(\widetilde{Q}_t^\pm)+1)(f(\widetilde{V}_t)+1)} & -\Big(\frac{\sqrt{f(\widetilde{V}_t)}}{(f(\widetilde{Q}_t^\pm)+1)(f(\widetilde{V}_t)+1)}+\frac{\sigma f(\widetilde{Q}_t^\pm)\sqrt{f(\widetilde{V}_t)}\rho}{(f(\widetilde{Q}_t^\pm)+1)(f(\widetilde{V}_t)+1)}\Big) & 0 \\ -\frac{\sigma f(\widetilde{P}^{\pm}_t)\sqrt{f(\widetilde{V}_t)}\sqrt{1-\rho^2}}{(f(\widetilde{P}^{\pm}_t)+1)(f(\widetilde{V}_t)+1)} & +\Big(\frac{\sqrt{f(\widetilde{V}_t)}}{(f(\widetilde{P}^{\pm}_t)+1)(f(\widetilde{V}_t)+1)}-\frac{\sigma f(\widetilde{P}^{\pm}_t)\sqrt{f(\widetilde{V}_t)}\rho}{(f(\widetilde{P}^{\pm}_t)+1)(f(\widetilde{V}_t)+1)}\Big) & 0 \\ \frac{\sigma\sqrt{1-\rho^2}\sqrt{f(\widetilde{V_t})}}{f(\widetilde{V_t})+1} & \frac{\sigma\rho\sqrt{f(\widetilde{V_t})}}{f(\widetilde{V_t})+1} & 0 \end{array}\right],
\end{align}
\begin{align}
\hat{\boldsymbol{\beta}}\left(\boldsymbol{X}_t\right)=\left[\begin{array}{c}
     \mathbf{1}_{\{\widetilde{Q}_t^\pm=0\}}\frac{1}{(f(\widetilde{V}_t)+1)\xi}\\
     \mathbf{1}_{\{\widetilde{P}^{\pm}_t=0\}} \frac{1}{(f(\widetilde{V}_t)+1)\eta}\\
     \frac{\kappa(\theta-f(\widetilde{V_t}))}{f(\widetilde{V_t})+1}-\frac{\sigma^2f(\widetilde{V_t})}{(f(\widetilde{V_t})+1)^2} 
\end{array}\right],
\end{align}
\begin{align}
    \hat{\boldsymbol{\Gamma}}\left(\boldsymbol{X}_t\right)=\left[\begin{array}{ccc} 0 & 0& 0\\
    0 & 0& 0 \\
    \frac{\sigma\sqrt{1-\rho^2}\sqrt{f(\widetilde{V_t})}}{f(\widetilde{V_t})+1} & \frac{\sigma\rho\sqrt{f(\widetilde{V_t})}}{f(\widetilde{V_t})+1} & 0
    \end{array}\right].
\end{align}

Let $\boldsymbol{n}(\boldsymbol{x})$ be the unit normal vector at the boundary point $\boldsymbol{x}$ pointing inwards, which is given by $\boldsymbol{n}(\boldsymbol{x})=\nabla \Phi(\boldsymbol{x})/|\nabla \Phi(\boldsymbol{x})|$. Since $\boldsymbol{\mu}(\boldsymbol{x})$ and $\boldsymbol{\Sigma}(\boldsymbol{x})$ are continuous and bounded; $\hat{\boldsymbol{\beta}}(\boldsymbol{x})\rho(\boldsymbol{x})$ and $\hat{\boldsymbol{\Gamma}}(\boldsymbol{x})\sqrt{\rho(\boldsymbol{x})}$ are continuous and bounded on $\partial\mathbb{S}$; and there exists some constant $C>0$ such that $(\hat{\boldsymbol{\beta}}(\boldsymbol{x})\rho(\boldsymbol{x}))^\top\boldsymbol{n}(\boldsymbol{x})+\rho(\boldsymbol{x})>C$, then there exists at least one solution to SDE system \eqref{eq:final-sticky-process-SDE} as implied by Theorems I.14 in \cite{graham1988martingale}. Consequently the solution to SDE system \eqref{eq:sticky-process-SDE} can be obtained by 
\begin{align}
\begin{cases}
    &D_t^\pm=\exp(-\exp(\widetilde{V}_t)(\exp(\widetilde{Q}_t^\pm)-1)),\\
    &U_t^\pm=\exp(\exp(\widetilde{V}_t)(\exp(\widetilde{P}_t^\pm)-1)), \\
    &V_t=\exp(\widetilde{V}_t)-1.
    \end{cases}
\end{align}
So the Theorem \ref{thm:existence-weak-sol-SDE} is proved.

\end{proof}

\begin{proof}[Proof of the Pricing PDE in Section \ref{sec:pricing-PDE}]

\begin{align}  d\left(e^{-r t} P(t, x, y, z, v;\Phi)\right)   =& -r e^{-r t} P(t, x, y, z, v;\Phi) d t+e^{-r t} d P(t, x, y, z, v;\Phi) \\ = & e^{-r t}\left[-r P d t+P_x d S_t^{ \pm}+P_y d \bar{S}_t^{ \pm}+P_z d \underline{S}_t^{ \pm}+P_t d t+P_v d V_t\right. \\  &\left.+ \frac{1}{2} P_{x x} d\left\langle S^{ \pm}, S^{ \pm}\right \rangle_t+\frac{1}{2} P_{v v} d\left\langle V, V\right \rangle_t+ P_{x v} d\left\langle S^{ \pm}, V\right \rangle_t \right] \\  = & e^{-r t}\left[-r P d t+P_x\left(r x d t+\sqrt{v} x 1_{\{x \neq\{y, z\}\}} d B_t\right)\right. \\ & +P_y\left(r y d t+y d L_t^{1}\left(D^{ \pm}\right)\right)+P_z\left(r z d t-z d L_t^{1}\left(U^{ \pm}\right)\right) \\ & +P_t d t+P_v\left[k_v(\theta-v) d t+\sigma \sqrt{v} d W_t\right] \\ & +\frac{1}{2} P_{x x} v x^2 1_{\{x \neq\{y, z\}\}} d t+\frac{1}{2} P_{v v} \sigma^2 v d t +P_{x v} \sigma v x \rho 1_{\{x \neq\{y, z\}\}} d t \\ = & e^{-r t}\left\{\left[-r P+r x P_x+r y P_y+r z P_z+P_t+k_v(\theta-v) P_v+\frac{1}{2} P_{x x} v x^2\right.\right. \\ & \left.+\frac{1}{2} \sigma^2 v P_{v v}+\sigma v x \rho P_{x v}\right] d t \\ & +\left(P_y y d L_t^{1}\left(D^{ \pm}\right)-\frac{1}{2} P_{x x} v x^2 1_{\{x=y\}} d t-\sigma v x \rho P_{x v} 1_{\{x=y\}} d t\right) \\ & +\left(-P_z z d L_t^{1}\left(U^{ \pm}\right)-\frac{1}{2} P_{x x} v x^2 1\{x=z\} d t -\sigma v x \rho P_{x v} 1_{\{x=z\}} d t\right) \\ & \left.+P_x \sqrt{v} x 1_{\{x \neq\{y, z\}\}} d B_t+P_v \sigma \sqrt{v} d W_t\right\} \\  = & e^{-r t}\left\{\left[-r p+r x P_x+r y P_y+r z P_z+P_t+k_v(\theta-v) P_v+\frac{1}{2} P_{x x} v x^2\right.\right. \\ & \left.+\frac{1}{2} \sigma^2 v P_{v v}+\sigma v x \rho P_{x v}\right] d t \\ & +\left(P_y y d L_t^{1}\left(D^{ \pm}\right)-\frac{1}{2} P_{x x} v y^2 k_s d L_t^{1}\left(D^{ \pm}\right)-\sigma v y \rho P_{x v} k_s d L_t^{1}\left(D^{ \pm}\right)\right) \\ & +\left(-P_z z d L_t^{1}\left(U^{ \pm}\right)-\frac{1}{2} P_{x x} v z^2 \eta d L_t^{1}\left(U^{ \pm}\right)-\sigma v z \rho P_{x v} \eta d L_t^{1}\left(U^{ \pm}\right)\right) \\ & \left.+P_x \sqrt{v} x 1_{\{x \neq\{y, z\}\}} d B_t+P_v \sigma \sqrt{v} d W_t\right\} \\ & \end{align}

For the function $P(t,x,y,z,v;\Phi)\in \mathbf{C}^2$, $e^{-rt}P(t,x,y,z,v;\Phi)$ is a local martingale. Then, for $0\leq t<T$, $0\leq z<x<y$, $P(t,x,y,z,v;\Phi)$ satisfies the PDE
\begin{align}
	\begin{array}{l}
	\frac{1}{2} v x^2 P_{x x}+\rho \sigma v x P_{x v}+\frac{1}{2} \sigma^2 v P_{v v}+r x P_x \\
	+\kappa(\theta-v) P_v+r y P_y+r z P_z+P_t=r P.
	\end{array}
\end{align}
When $x=y$ and $x=z$, the boundary conditions are due to the facts that 
\[
\left(P_y y -\frac{1}{2} P_{x x} v y^2 k_s -\sigma v y \rho P_{x v} k_s \right)d L_t^{1}\left(D^{ \pm}\right)=0
\]
and
\[
\left(-P_z z -\frac{1}{2} P_{x x} v z^2 \eta -\sigma v z \rho P_{x v} \eta \right) d L_t^{1}\left(U^{ \pm}\right)=0.
\]
\end{proof}

\section{Calibration Algorithm}
\label{app-cali}
\begin{algorithm}[H]
  \caption{Deep Levenberg–Marquardt Calibration Algorithm}
  \begin{algorithmic}[1]
      \State Initialize values for parameter set $\phi$, and the Levenberg–Marquardt parameters $\lambda$, $\lambda_{down}$ and $\lambda_{up}$. Obtain the scaling factor $l_d$, underlying asset prices, strike prices and maturities. 
      \State Evaluate the root mean square error, $r\gets \sqrt{\frac{\sum_{i=1}^{N_d}(P^{\theta}_i-P^{MKT}_i)^2}{N_d}}$, and the Jacobian, $J\gets \frac{\partial P^{\theta}}{\partial \phi}$, at the initial parameter guess.
       
      \State $t$ $\gets$ 1, $rmse\gets 0$, $iter\gets 0$
      \While{$t\le 10000$}
        \State $g\gets J^\intercal J+\lambda I$, $\triangledown C\gets J^\intercal r$
        \State $\phi_{new}\gets \phi-g^{-1}\triangledown C$
        \State Check if the parameters in $\phi_{new}$ are within an appropriate range
        \State Evaluate root mean square error $r_{new}$ with $\phi_{new}$
        \If{$r_{new}<r$}
        \State $\phi\gets \phi_{new}$, $r\gets r_{new}$, $\lambda\gets \lambda/\lambda_{down}$, $iter\gets 0$
        \If{$|r_{new}-rmse|<10^{-10}$}
        \State \textbf{break}
        \EndIf
        \Else
        \State $\lambda=\lambda\times \lambda_{up}$
        \State $iter=iter+1$
        \If{$iter>20$}
        \State \textbf{break}
        \EndIf
      \EndIf
      \State $rmse\gets r$
      \State $t\gets t+1$
      \EndWhile

  \end{algorithmic}
\end{algorithm}

\section{Sticky Drawdown Stochastic Volatility model and Sticky Drawup Stochastic Volatility model}
\label{app-SVSDSVSU}

The SVSD model follows the SDE:
\[
\left\{\begin{array}{l}d D_t^{ \pm}=1_{\left\{D_t^{ \pm} \neq 1 \right\}} \sqrt{V_t} D_t^{ \pm} d B_t-D_t^{ \pm} d L_t^1\left(D^{ \pm}\right), \\ d V_t=\kappa\left(\theta-V_t\right) d t+\sigma \sqrt{V_t} d W_t, \\ \int_0^t 1_{\left\{D_s^{ \pm}=1\right\}} d s=\xi L_t^1\left(D^{ \pm}\right).\end{array}\right.
\]

$$
\begin{aligned}
& \left\{\begin{array}{l}
d S_t^{ \pm}=r S_t^{ \pm} d t+\sqrt{V_t} S_t^{ \pm} \mathbf{1}_{\left\{S_t^{ \pm} \neq\bar{S}_t^{ \pm} \right\}} d B_t, \\
d V_t=\kappa\left(\theta-V_t\right) d t+\sigma \sqrt{V_t} d W_t \\
\int_0^t 1_{\left\{S_t^{ \pm}=\bar{S}_t^{ \pm}\right\}} d s=\xi L_t^1\left(D^{ \pm}\right) \\
d \bar{S}_t^{ \pm}=r \bar{S}_t^{ \pm} d t+\bar{S}_t^{ \pm} d L_t^1\left(D^{ \pm}\right) .
\end{array}\right. \\
&
\end{aligned}
$$
By Feynman-Kac Theorem, we have the pricing PDE:
\begin{align}
	\begin{array}{l}
	\frac{1}{2} v x^2 P_{x x}+\rho \sigma v x P_{x v}+\frac{1}{2} \sigma^2 v P_{v v}+r x P_x \\
	+\kappa(\theta-v) P_v+r y P_y+P_t=r P,
	\end{array}
\end{align}
where $0\le t<T$ and $0<x<y$. The boundary is at $x = y$:
\begin{align}
	\begin{array}{l}
	P_y(t,y,y,v;\Phi)=\left[\frac{1}{2} v y P_{x x}(t,y,y,v;\Phi)+\rho \sigma v P_{x v}(t,y,y,v;\Phi)\right] \xi. 
	\end{array}
\end{align}
The terminal condition is
\begin{align}
	P(T, x, y, v;\Phi) = (x-K)^+, 
\end{align}
where $0< x\le y$.

The SVSU model is defined by:
\[
\left\{\begin{array}{l} d U_t^{ \pm}=1_{\left\{U_t^{ \pm} \neq 1\right\}} \sqrt{V_t} U_t^{ \pm} d B_t+U_t^{ \pm} d L_t^1\left(U^{ \pm}\right), \\ d V_t=\kappa\left(\theta-V_t\right) d t+\sigma \sqrt{V_t} d W_t, \\   \int_0^t 1_{\left\{U_s^{ \pm}=1\right\}} d s=\eta L_t^1\left(U^{ \pm}\right) .\end{array}\right.
\]
$$
\begin{aligned}
& \left\{\begin{array}{l}
d S_t^{ \pm}=r S_t^{ \pm} d t+\sqrt{V_t} S_t^{ \pm} \mathbf{1}_{\left\{S_t^{ \pm} \neq\underline{S}_t^{ \pm}\right\}} d B_t, \\
d V_t=\kappa\left(\theta-V_t\right) d t+\sigma \sqrt{V_t} d W_t \\
\int_0^t \mathbf{1}_{\left\{S_t^{ \pm}=\underline{S}_t^{ \pm}\right\}} d s=\eta L_t^1\left(U^{ \pm}\right) \\
d \underline{S}_t^{ \pm}=r \underline{S}_t^{ \pm} d t-\underline{S}_t^{ \pm} d L_t^1\left(U^{ \pm}\right) .
\end{array}\right. \\
&
\end{aligned}
$$
By Feynman-Kac Theorem, we have the pricing PDE: 
\begin{align}
	\begin{array}{l}
	\frac{1}{2} v x^2 P_{x x}+\rho \sigma v x P_{x v}+\frac{1}{2} \sigma^2 v P_{v v}+r x P_x \\
	+\kappa(\theta-v) P_v+r z P_z+P_t=r P,
	\end{array}
\end{align}
where $0\le t<T$ and $0< z<x$. The boundary is at $x=z$:
\begin{align}
	\begin{array}{l}
	P_z(t,z,z,v;\Phi)=-\left[\frac{1}{2} v z P_{x x}(t,z,z,v;\Phi)+\rho \sigma v P_{x v}(t,z,z,v;\Phi)\right] \eta.
	\end{array}
\end{align}
The terminal condition is
\begin{align}
	P(T, x, z, v;\Phi) = (x-K)^+,
\end{align}
where $0< z\le x$.

        \bibliographystyle{chicagoa}

\end{document}